\def\section{\@startsection {section}{1}{\z@}{-3.25ex plus -1ex minus
 -.2ex}{1.5ex plus .2ex}{\large\bf}}
\def\subsection{\@startsection{subsection}{2}{\z@}{-3.25ex plus -1ex minus
 -.2ex}{1.5ex plus .2ex}{\normalsize\bf}}
\def\subsubsection{\@startsection{subsubsection}{3}{\z@}{3.25ex plus
 1ex minus .2ex}{-1em}{\normalsize\bf}}
\def\paragraph{\@startsection{paragraph}{4}{\z@}{3.25ex plus 1ex minus
  .2ex}{-1em}{\normalsize\bf}}
\def\subparagraph{\@startsection{subparagraph}{4}{\parindent}{3.25ex
  plus 1ex minus .2ex}{-1em}{\normalsize\bf}}
\newcommand{\verbfont}%
{%
    \fontsize{7pt}{7.0pt}%
}
\newcounter{verbline}
\def\verblineinput#1{
\begingroup
\setcounter{verbline}{0}%
\everypar{{\makebox[0.4cm][r]{\addtocounter{verbline}{1}~}}}%
\verbfont%
\def\par{\leavevmode\null\@@par}%
\obeylines%
\frenchspacing\@vobeyspaces%
\@makeother\$\@makeother\&\@makeother\#%
\@makeother\^\@makeother\_\@makeother\~%
\@makeother\{\@makeother\}%
\@makeother\%\tt\@noligs\input{#1}\endgroup} \makeatother
\newenvironment{smallerfont}{\fontsize{8}{10}\selectfont}{\normalsize}
 \newcommand{\smallsec}[1]{\smallskip\noindent{\bf #1.}}
\newcommand{\blast}{\textsc{Blast}\xspace}
\newcommand{\slam}{\textsc{SLAM}\xspace}
\newcommand{\cil}{{\sc Cil}}
\newcommand{\cpachecker}{\textsc{\small CPAchecker}\xspace}
\newcommand{\mathsat}{\textsc{MathSAT}\xspace}
\newcommand{\satabs}{\textsc{SATabs}\xspace}
\newcommand{\calysto}{\textsc{Calysto}\xspace}
\newcommand{\true}{\mathit{true}}
\newcommand{\seq}[1]{{\langle #1 \rangle}}
\newcommand{\locs}{\mathit{L}}
\newcommand{\op}{\mathit{op}}
\newcommand{\pc}{\mathit{l}}
\newcommand{\pci}{{\pc_0}}
\newcommand{\pct}{{\pc_E}}
\newcommand{\preds}{\mathcal{P}}
\newcommand{\concr}{\ensuremath{{\cal C}}}
\newcommand{\cart}{\mathbb{C}}
\newcommand{\bool}{\mathbb{B}}
\newcommand{\Ints}{\mathbb{Z}}
\renewcommand{\implies}{\Rightarrow}
\renewcommand{\prec}{\pi}
\newcommand{\PREC}{\Pi}
\newcommand{\path}{\sigma}
\renewcommand{\phi}{\varphi}
\newcommand{\SP}{{\sf SP}}
\newcommand{\sequence}{\ensuremath{\,;\,}}
\newcommand{\abs}[1]{\ensuremath{{#1}}}
\renewcommand{\iff}{\Leftrightarrow}
\newenvironment{example}
    {
      \medskip
      \noindent
      {{\bf Example.}}
    }
    { 
      \hfill \qed
      \medskip
    }
\def\qed{\relax\ifmmode\hfill \Box\else\unskip\nobreak\hfill $\Box$\fi}
\newtheorem{thm}{Theorem}[section]
\newtheorem{cor}[thm]{Corollary}
\newtheorem{prop}[thm]{Proposition}
\newtheorem{lem}[thm]{Lemma}
\newenvironment{theorem}{\vspace{-\lastskip}\par\addvspace{.6pc plus
    .2pc minus .1pc}\begin{thm}}{\end{thm}\par\addvspace{.6pc plus
    .2pc minus .1pc}} \newenvironment{lemma}{\vspace{-\lastskip}\par
  \addvspace{.6pc plus .2pc minus .1pc}\begin{lem}}
  {\end{lem}\par\addvspace{.6pc plus .2pc minus .1pc}}
\newcommand{\ignore}[1]{}
\definecolor {snow}                {rgb}{1.00,0.98,0.98}
\definecolor {ghostwhite}          {rgb}{0.97,0.97,1.00}
\definecolor {whitesmoke}          {rgb}{0.96,0.96,0.96}
\definecolor {gainsboro}           {rgb}{0.86,0.86,0.86}
\definecolor {floralwhite}         {rgb}{1.00,0.98,0.94}
\definecolor {oldlace}             {rgb}{0.99,0.96,0.90}
\definecolor {linen}               {rgb}{0.98,0.94,0.90}
\definecolor {antiquewhite}        {rgb}{0.98,0.92,0.84}
\definecolor {papayawhip}          {rgb}{1.00,0.94,0.84}
\definecolor {blanchedalmond}      {rgb}{1.00,0.92,0.80}
\definecolor {bisque}              {rgb}{1.00,0.89,0.77}
\definecolor {peachpuff}           {rgb}{1.00,0.85,0.73}
\definecolor {navajowhite}         {rgb}{1.00,0.87,0.68}
\definecolor {moccasin}            {rgb}{1.00,0.89,0.71}
\definecolor {cornsilk}            {rgb}{1.00,0.97,0.86}
\definecolor {ivory}               {rgb}{1.00,1.00,0.94}
\definecolor {lemonchiffon}        {rgb}{1.00,0.98,0.80}
\definecolor {seashell}            {rgb}{1.00,0.96,0.93}
\definecolor {honeydew}            {rgb}{0.94,1.00,0.94}
\definecolor {mintcream}           {rgb}{0.96,1.00,0.98}
\definecolor {azure}               {rgb}{0.94,1.00,1.00}
\definecolor {aliceblue}           {rgb}{0.94,0.97,1.00}
\definecolor {lavender}            {rgb}{0.90,0.90,0.98}
\definecolor {lavenderblush}       {rgb}{1.00,0.94,0.96}
\definecolor {mistyrose}           {rgb}{1.00,0.89,0.88}
\definecolor {white}               {rgb}{1.00,1.00,1.00}
\definecolor {black}               {rgb}{0.00,0.00,0.00}
\definecolor {darkslategray}       {rgb}{0.18,0.31,0.31}
\definecolor {dimgray}             {rgb}{0.41,0.41,0.41}
\definecolor {slategray}           {rgb}{0.44,0.50,0.56}
\definecolor {lightslategray}      {rgb}{0.47,0.53,0.60}
\definecolor {gray}                {rgb}{0.75,0.75,0.75}
\definecolor {lightgrey}           {rgb}{0.83,0.83,0.83}
\definecolor {midnightblue}        {rgb}{0.10,0.10,0.44}
\definecolor {navy}                {rgb}{0.00,0.00,0.50}
\definecolor {cornflowerblue}      {rgb}{0.39,0.58,0.93}
\definecolor {darkslateblue}       {rgb}{0.28,0.24,0.55}
\definecolor {slateblue}           {rgb}{0.42,0.35,0.80}
\definecolor {mediumslateblue}     {rgb}{0.48,0.41,0.93}
\definecolor {lightslateblue}      {rgb}{0.52,0.44,1.00}
\definecolor {mediumblue}          {rgb}{0.00,0.00,0.80}
\definecolor {royalblue}           {rgb}{0.25,0.41,0.88}
\definecolor {blue}                {rgb}{0.00,0.00,1.00}
\definecolor {dodgerblue}          {rgb}{0.12,0.56,1.00}
\definecolor {deepskyblue}         {rgb}{0.00,0.75,1.00}
\definecolor {skyblue}             {rgb}{0.53,0.81,0.92}
\definecolor {lightskyblue}        {rgb}{0.53,0.81,0.98}
\definecolor {steelblue}           {rgb}{0.27,0.51,0.71}
\definecolor {lightsteelblue}      {rgb}{0.69,0.77,0.87}
\definecolor {lightblue}           {rgb}{0.68,0.85,0.90}
\definecolor {powderblue}          {rgb}{0.69,0.88,0.90}
\definecolor {paleturquoise}       {rgb}{0.69,0.93,0.93}
\definecolor {darkturquoise}       {rgb}{0.00,0.81,0.82}
\definecolor {mediumturquoise}     {rgb}{0.28,0.82,0.80}
\definecolor {turquoise}           {rgb}{0.25,0.88,0.82}
\definecolor {cyan}                {rgb}{0.00,1.00,1.00}
\definecolor {lightcyan}           {rgb}{0.88,1.00,1.00}
\definecolor {cadetblue}           {rgb}{0.37,0.62,0.63}
\definecolor {mediumaquamarine}    {rgb}{0.40,0.80,0.67}
\definecolor {aquamarine}          {rgb}{0.50,1.00,0.83}
\definecolor {darkgreen}           {rgb}{0.00,0.39,0.00}
\definecolor {darkolivegreen}      {rgb}{0.33,0.42,0.18}
\definecolor {darkseagreen}        {rgb}{0.56,0.74,0.56}
\definecolor {seagreen}            {rgb}{0.18,0.55,0.34}
\definecolor {mediumseagreen}      {rgb}{0.24,0.70,0.44}
\definecolor {lightseagreen}       {rgb}{0.13,0.70,0.67}
\definecolor {palegreen}           {rgb}{0.60,0.98,0.60}
\definecolor {springgreen}         {rgb}{0.00,1.00,0.50}
\definecolor {lawngreen}           {rgb}{0.49,0.99,0.00}
\definecolor {green}               {rgb}{0.00,1.00,0.00}
\definecolor {chartreuse}          {rgb}{0.50,1.00,0.00}
\definecolor {mediumspringgreen}   {rgb}{0.00,0.98,0.60}
\definecolor {greenyellow}         {rgb}{0.68,1.00,0.18}
\definecolor {limegreen}           {rgb}{0.20,0.80,0.20}
\definecolor {yellowgreen}         {rgb}{0.60,0.80,0.20}
\definecolor {forestgreen}         {rgb}{0.13,0.55,0.13}
\definecolor {olivedrab}           {rgb}{0.42,0.56,0.14}
\definecolor {darkkhaki}           {rgb}{0.74,0.72,0.42}
\definecolor {khaki}               {rgb}{0.94,0.90,0.55}
\definecolor {palegoldenrod}       {rgb}{0.93,0.91,0.67}
\definecolor {lightgoldenrodyellow} {rgb}{0.98,0.98,0.82}
\definecolor {lightyellow}         {rgb}{1.00,1.00,0.88}
\definecolor {yellow}              {rgb}{1.00,1.00,0.00}
\definecolor {gold}                {rgb}{1.00,0.84,0.00}
\definecolor {lightgoldenrod}      {rgb}{0.93,0.87,0.51}
\definecolor {goldenrod}           {rgb}{0.85,0.65,0.13}
\definecolor {darkgoldenrod}       {rgb}{0.72,0.53,0.04}
\definecolor {rosybrown}           {rgb}{0.74,0.56,0.56}
\definecolor {indianred}           {rgb}{0.80,0.36,0.36}
\definecolor {saddlebrown}         {rgb}{0.55,0.27,0.07}
\definecolor {sienna}              {rgb}{0.63,0.32,0.18}
\definecolor {peru}                {rgb}{0.80,0.52,0.25}
\definecolor {burlywood}           {rgb}{0.87,0.72,0.53}
\definecolor {beige}               {rgb}{0.96,0.96,0.86}
\definecolor {wheat}               {rgb}{0.96,0.87,0.70}
\definecolor {sandybrown}          {rgb}{0.96,0.64,0.38}
\definecolor {tan}                 {rgb}{0.82,0.71,0.55}
\definecolor {chocolate}           {rgb}{0.82,0.41,0.12}
\definecolor {firebrick}           {rgb}{0.70,0.13,0.13}
\definecolor {brown}               {rgb}{0.65,0.16,0.16}
\definecolor {darksalmon}          {rgb}{0.91,0.59,0.48}
\definecolor {salmon}              {rgb}{0.98,0.50,0.45}
\definecolor {lightsalmon}         {rgb}{1.00,0.63,0.48}
\definecolor {orange}              {rgb}{1.00,0.65,0.00}
\definecolor {darkorange}          {rgb}{1.00,0.55,0.00}
\definecolor {coral}               {rgb}{1.00,0.50,0.31}
\definecolor {lightcoral}          {rgb}{0.94,0.50,0.50}
\definecolor {tomato}              {rgb}{1.00,0.39,0.28}
\definecolor {orangered}           {rgb}{1.00,0.27,0.00}
\definecolor {red}                 {rgb}{1.00,0.00,0.00}
\definecolor {hotpink}             {rgb}{1.00,0.41,0.71}
\definecolor {deeppink}            {rgb}{1.00,0.08,0.58}
\definecolor {pink}                {rgb}{1.00,0.75,0.80}
\definecolor {lightpink}           {rgb}{1.00,0.71,0.76}
\definecolor {palevioletred}       {rgb}{0.86,0.44,0.58}
\definecolor {maroon}              {rgb}{0.69,0.19,0.38}
\definecolor {mediumvioletred}     {rgb}{0.78,0.08,0.52}
\definecolor {violetred}           {rgb}{0.82,0.13,0.56}
\definecolor {magenta}             {rgb}{1.00,0.00,1.00}
\definecolor {violet}              {rgb}{0.93,0.51,0.93}
\definecolor {plum}                {rgb}{0.87,0.63,0.87}
\definecolor {orchid}              {rgb}{0.85,0.44,0.84}
\definecolor {mediumorchid}        {rgb}{0.73,0.33,0.83}
\definecolor {darkorchid}          {rgb}{0.60,0.20,0.80}
\definecolor {darkviolet}          {rgb}{0.58,0.00,0.83}
\definecolor {blueviolet}          {rgb}{0.54,0.17,0.89}
\definecolor {purple}              {rgb}{0.63,0.13,0.94}
\definecolor {mediumpurple}        {rgb}{0.58,0.44,0.86}
\definecolor {thistle}             {rgb}{0.85,0.75,0.85}
\definecolor {snow2}               {rgb}{0.93,0.91,0.91}
\definecolor {snow3}               {rgb}{0.80,0.79,0.79}
\definecolor {snow4}               {rgb}{0.55,0.54,0.54}
\definecolor {seashell2}           {rgb}{0.93,0.90,0.87}
\definecolor {seashell3}           {rgb}{0.80,0.77,0.75}
\definecolor {seashell4}           {rgb}{0.55,0.53,0.51}
\definecolor {antiquewhite1}       {rgb}{1.00,0.94,0.86}
\definecolor {antiquewhite2}       {rgb}{0.93,0.87,0.80}
\definecolor {antiquewhite3}       {rgb}{0.80,0.75,0.69}
\definecolor {antiquewhite4}       {rgb}{0.55,0.51,0.47}
\definecolor {bisque2}             {rgb}{0.93,0.84,0.72}
\definecolor {bisque3}             {rgb}{0.80,0.72,0.62}
\definecolor {bisque4}             {rgb}{0.55,0.49,0.42}
\definecolor {peachpuff2}          {rgb}{0.93,0.80,0.68}
\definecolor {peachpuff3}          {rgb}{0.80,0.69,0.58}
\definecolor {peachpuff4}          {rgb}{0.55,0.47,0.40}
\definecolor {navajowhite2}        {rgb}{0.93,0.81,0.63}
\definecolor {navajowhite3}        {rgb}{0.80,0.70,0.55}
\definecolor {navajowhite4}        {rgb}{0.55,0.47,0.37}
\definecolor {lemonchiffon2}       {rgb}{0.93,0.91,0.75}
\definecolor {lemonchiffon3}       {rgb}{0.80,0.79,0.65}
\definecolor {lemonchiffon4}       {rgb}{0.55,0.54,0.44}
\definecolor {cornsilk2}           {rgb}{0.93,0.91,0.80}
\definecolor {cornsilk3}           {rgb}{0.80,0.78,0.69}
\definecolor {cornsilk4}           {rgb}{0.55,0.53,0.47}
\definecolor {ivory2}              {rgb}{0.93,0.93,0.88}
\definecolor {ivory3}              {rgb}{0.80,0.80,0.76}
\definecolor {ivory4}              {rgb}{0.55,0.55,0.51}
\definecolor {honeydew2}           {rgb}{0.88,0.93,0.88}
\definecolor {honeydew3}           {rgb}{0.76,0.80,0.76}
\definecolor {honeydew4}           {rgb}{0.51,0.55,0.51}
\definecolor {lavenderblush2}      {rgb}{0.93,0.88,0.90}
\definecolor {lavenderblush3}      {rgb}{0.80,0.76,0.77}
\definecolor {lavenderblush4}      {rgb}{0.55,0.51,0.53}
\definecolor {mistyrose2}          {rgb}{0.93,0.84,0.82}
\definecolor {mistyrose3}          {rgb}{0.80,0.72,0.71}
\definecolor {mistyrose4}          {rgb}{0.55,0.49,0.48}
\definecolor {azure2}              {rgb}{0.88,0.93,0.93}
\definecolor {azure3}              {rgb}{0.76,0.80,0.80}
\definecolor {azure4}              {rgb}{0.51,0.55,0.55}
\definecolor {slateblue1}          {rgb}{0.51,0.44,1.00}
\definecolor {slateblue2}          {rgb}{0.48,0.40,0.93}
\definecolor {slateblue3}          {rgb}{0.41,0.35,0.80}
\definecolor {slateblue4}          {rgb}{0.28,0.24,0.55}
\definecolor {royalblue1}          {rgb}{0.28,0.46,1.00}
\definecolor {royalblue2}          {rgb}{0.26,0.43,0.93}
\definecolor {royalblue3}          {rgb}{0.23,0.37,0.80}
\definecolor {royalblue4}          {rgb}{0.15,0.25,0.55}
\definecolor {blue2}               {rgb}{0.00,0.00,0.93}
\definecolor {blue4}               {rgb}{0.00,0.00,0.55}
\definecolor {dodgerblue2}         {rgb}{0.11,0.53,0.93}
\definecolor {dodgerblue3}         {rgb}{0.09,0.45,0.80}
\definecolor {dodgerblue4}         {rgb}{0.06,0.31,0.55}
\definecolor {steelblue1}          {rgb}{0.39,0.72,1.00}
\definecolor {steelblue2}          {rgb}{0.36,0.67,0.93}
\definecolor {steelblue3}          {rgb}{0.31,0.58,0.80}
\definecolor {steelblue4}          {rgb}{0.21,0.39,0.55}
\definecolor {deepskyblue2}        {rgb}{0.00,0.70,0.93}
\definecolor {deepskyblue3}        {rgb}{0.00,0.60,0.80}
\definecolor {deepskyblue4}        {rgb}{0.00,0.41,0.55}
\definecolor {skyblue1}            {rgb}{0.53,0.81,1.00}
\definecolor {skyblue2}            {rgb}{0.49,0.75,0.93}
\definecolor {skyblue3}            {rgb}{0.42,0.65,0.80}
\definecolor {skyblue4}            {rgb}{0.29,0.44,0.55}
\definecolor {lightskyblue1}       {rgb}{0.69,0.89,1.00}
\definecolor {lightskyblue2}       {rgb}{0.64,0.83,0.93}
\definecolor {lightskyblue3}       {rgb}{0.55,0.71,0.80}
\definecolor {lightskyblue4}       {rgb}{0.38,0.48,0.55}
\definecolor {slategray1}          {rgb}{0.78,0.89,1.00}
\definecolor {slategray2}          {rgb}{0.73,0.83,0.93}
\definecolor {slategray3}          {rgb}{0.62,0.71,0.80}
\definecolor {slategray4}          {rgb}{0.42,0.48,0.55}
\definecolor {lightsteelblue1}     {rgb}{0.79,0.88,1.00}
\definecolor {lightsteelblue2}     {rgb}{0.74,0.82,0.93}
\definecolor {lightsteelblue3}     {rgb}{0.64,0.71,0.80}
\definecolor {lightsteelblue4}     {rgb}{0.43,0.48,0.55}
\definecolor {lightblue1}          {rgb}{0.75,0.94,1.00}
\definecolor {lightblue2}          {rgb}{0.70,0.87,0.93}
\definecolor {lightblue3}          {rgb}{0.60,0.75,0.80}
\definecolor {lightblue4}          {rgb}{0.41,0.51,0.55}
\definecolor {lightcyan2}          {rgb}{0.82,0.93,0.93}
\definecolor {lightcyan3}          {rgb}{0.71,0.80,0.80}
\definecolor {lightcyan4}          {rgb}{0.48,0.55,0.55}
\definecolor {paleturquoise1}      {rgb}{0.73,1.00,1.00}
\definecolor {paleturquoise2}      {rgb}{0.68,0.93,0.93}
\definecolor {paleturquoise3}      {rgb}{0.59,0.80,0.80}
\definecolor {paleturquoise4}      {rgb}{0.40,0.55,0.55}
\definecolor {cadetblue1}          {rgb}{0.60,0.96,1.00}
\definecolor {cadetblue2}          {rgb}{0.56,0.90,0.93}
\definecolor {cadetblue3}          {rgb}{0.48,0.77,0.80}
\definecolor {cadetblue4}          {rgb}{0.33,0.53,0.55}
\definecolor {turquoise1}          {rgb}{0.00,0.96,1.00}
\definecolor {turquoise2}          {rgb}{0.00,0.90,0.93}
\definecolor {turquoise3}          {rgb}{0.00,0.77,0.80}
\definecolor {turquoise4}          {rgb}{0.00,0.53,0.55}
\definecolor {cyan2}               {rgb}{0.00,0.93,0.93}
\definecolor {cyan3}               {rgb}{0.00,0.80,0.80}
\definecolor {cyan4}               {rgb}{0.00,0.55,0.55}
\definecolor {darkslategray1}      {rgb}{0.59,1.00,1.00}
\definecolor {darkslategray2}      {rgb}{0.55,0.93,0.93}
\definecolor {darkslategray3}      {rgb}{0.47,0.80,0.80}
\definecolor {darkslategray4}      {rgb}{0.32,0.55,0.55}
\definecolor {aquamarine2}         {rgb}{0.46,0.93,0.78}
\definecolor {aquamarine4}         {rgb}{0.27,0.55,0.45}
\definecolor {darkseagreen1}       {rgb}{0.76,1.00,0.76}
\definecolor {darkseagreen2}       {rgb}{0.71,0.93,0.71}
\definecolor {darkseagreen3}       {rgb}{0.61,0.80,0.61}
\definecolor {darkseagreen4}       {rgb}{0.41,0.55,0.41}
\definecolor {seagreen1}           {rgb}{0.33,1.00,0.62}
\definecolor {seagreen2}           {rgb}{0.31,0.93,0.58}
\definecolor {seagreen3}           {rgb}{0.26,0.80,0.50}
\definecolor {palegreen1}          {rgb}{0.60,1.00,0.60}
\definecolor {palegreen2}          {rgb}{0.56,0.93,0.56}
\definecolor {palegreen3}          {rgb}{0.49,0.80,0.49}
\definecolor {palegreen4}          {rgb}{0.33,0.55,0.33}
\definecolor {springgreen2}        {rgb}{0.00,0.93,0.46}
\definecolor {springgreen3}        {rgb}{0.00,0.80,0.40}
\definecolor {springgreen4}        {rgb}{0.00,0.55,0.27}
\definecolor {green2}              {rgb}{0.00,0.93,0.00}
\definecolor {green3}              {rgb}{0.00,0.80,0.00}
\definecolor {green4}              {rgb}{0.00,0.55,0.00}
\definecolor {chartreuse2}         {rgb}{0.46,0.93,0.00}
\definecolor {chartreuse3}         {rgb}{0.40,0.80,0.00}
\definecolor {chartreuse4}         {rgb}{0.27,0.55,0.00}
\definecolor {olivedrab1}          {rgb}{0.75,1.00,0.24}
\definecolor {olivedrab2}          {rgb}{0.70,0.93,0.23}
\definecolor {olivedrab4}          {rgb}{0.41,0.55,0.13}
\definecolor {darkolivegreen1}     {rgb}{0.79,1.00,0.44}
\definecolor {darkolivegreen2}     {rgb}{0.74,0.93,0.41}
\definecolor {darkolivegreen3}     {rgb}{0.64,0.80,0.35}
\definecolor {darkolivegreen4}     {rgb}{0.43,0.55,0.24}
\definecolor {khaki1}              {rgb}{1.00,0.96,0.56}
\definecolor {khaki2}              {rgb}{0.93,0.90,0.52}
\definecolor {khaki3}              {rgb}{0.80,0.78,0.45}
\definecolor {khaki4}              {rgb}{0.55,0.53,0.31}
\definecolor {lightgoldenrod1}     {rgb}{1.00,0.93,0.55}
\definecolor {lightgoldenrod2}     {rgb}{0.93,0.86,0.51}
\definecolor {lightgoldenrod3}     {rgb}{0.80,0.75,0.44}
\definecolor {lightgoldenrod4}     {rgb}{0.55,0.51,0.30}
\definecolor {lightyellow2}        {rgb}{0.93,0.93,0.82}
\definecolor {lightyellow3}        {rgb}{0.80,0.80,0.71}
\definecolor {lightyellow4}        {rgb}{0.55,0.55,0.48}
\definecolor {yellow2}             {rgb}{0.93,0.93,0.00}
\definecolor {yellow3}             {rgb}{0.80,0.80,0.00}
\definecolor {yellow4}             {rgb}{0.55,0.55,0.00}
\definecolor {gold2}               {rgb}{0.93,0.79,0.00}
\definecolor {gold3}               {rgb}{0.80,0.68,0.00}
\definecolor {gold4}               {rgb}{0.55,0.46,0.00}
\definecolor {goldenrod1}          {rgb}{1.00,0.76,0.15}
\definecolor {goldenrod2}          {rgb}{0.93,0.71,0.13}
\definecolor {goldenrod3}          {rgb}{0.80,0.61,0.11}
\definecolor {goldenrod4}          {rgb}{0.55,0.41,0.08}
\definecolor {darkgoldenrod1}      {rgb}{1.00,0.73,0.06}
\definecolor {darkgoldenrod2}      {rgb}{0.93,0.68,0.05}
\definecolor {darkgoldenrod3}      {rgb}{0.80,0.58,0.05}
\definecolor {darkgoldenrod4}      {rgb}{0.55,0.40,0.03}
\definecolor {rosybrown1}          {rgb}{1.00,0.76,0.76}
\definecolor {rosybrown2}          {rgb}{0.93,0.71,0.71}
\definecolor {rosybrown3}          {rgb}{0.80,0.61,0.61}
\definecolor {rosybrown4}          {rgb}{0.55,0.41,0.41}
\definecolor {indianred1}          {rgb}{1.00,0.42,0.42}
\definecolor {indianred2}          {rgb}{0.93,0.39,0.39}
\definecolor {indianred3}          {rgb}{0.80,0.33,0.33}
\definecolor {indianred4}          {rgb}{0.55,0.23,0.23}
\definecolor {sienna1}             {rgb}{1.00,0.51,0.28}
\definecolor {sienna2}             {rgb}{0.93,0.47,0.26}
\definecolor {sienna3}             {rgb}{0.80,0.41,0.22}
\definecolor {sienna4}             {rgb}{0.55,0.28,0.15}
\definecolor {burlywood1}          {rgb}{1.00,0.83,0.61}
\definecolor {burlywood2}          {rgb}{0.93,0.77,0.57}
\definecolor {burlywood3}          {rgb}{0.80,0.67,0.49}
\definecolor {burlywood4}          {rgb}{0.55,0.45,0.33}
\definecolor {wheat1}              {rgb}{1.00,0.91,0.73}
\definecolor {wheat2}              {rgb}{0.93,0.85,0.68}
\definecolor {wheat3}              {rgb}{0.80,0.73,0.59}
\definecolor {wheat4}              {rgb}{0.55,0.49,0.40}
\definecolor {tan1}                {rgb}{1.00,0.65,0.31}
\definecolor {tan2}                {rgb}{0.93,0.60,0.29}
\definecolor {tan4}                {rgb}{0.55,0.35,0.17}
\definecolor {chocolate1}          {rgb}{1.00,0.50,0.14}
\definecolor {chocolate2}          {rgb}{0.93,0.46,0.13}
\definecolor {chocolate3}          {rgb}{0.80,0.40,0.11}
\definecolor {firebrick1}          {rgb}{1.00,0.19,0.19}
\definecolor {firebrick2}          {rgb}{0.93,0.17,0.17}
\definecolor {firebrick3}          {rgb}{0.80,0.15,0.15}
\definecolor {firebrick4}          {rgb}{0.55,0.10,0.10}
\definecolor {brown1}              {rgb}{1.00,0.25,0.25}
\definecolor {brown2}              {rgb}{0.93,0.23,0.23}
\definecolor {brown3}              {rgb}{0.80,0.20,0.20}
\definecolor {brown4}              {rgb}{0.55,0.14,0.14}
\definecolor {salmon1}             {rgb}{1.00,0.55,0.41}
\definecolor {salmon2}             {rgb}{0.93,0.51,0.38}
\definecolor {salmon3}             {rgb}{0.80,0.44,0.33}
\definecolor {salmon4}             {rgb}{0.55,0.30,0.22}
\definecolor {lightsalmon2}        {rgb}{0.93,0.58,0.45}
\definecolor {lightsalmon3}        {rgb}{0.80,0.51,0.38}
\definecolor {lightsalmon4}        {rgb}{0.55,0.34,0.26}
\definecolor {orange2}             {rgb}{0.93,0.60,0.00}
\definecolor {orange3}             {rgb}{0.80,0.52,0.00}
\definecolor {orange4}             {rgb}{0.55,0.35,0.00}
\definecolor {darkorange1}         {rgb}{1.00,0.50,0.00}
\definecolor {darkorange2}         {rgb}{0.93,0.46,0.00}
\definecolor {darkorange3}         {rgb}{0.80,0.40,0.00}
\definecolor {darkorange4}         {rgb}{0.55,0.27,0.00}
\definecolor {coral1}              {rgb}{1.00,0.45,0.34}
\definecolor {coral2}              {rgb}{0.93,0.42,0.31}
\definecolor {coral3}              {rgb}{0.80,0.36,0.27}
\definecolor {coral4}              {rgb}{0.55,0.24,0.18}
\definecolor {tomato2}             {rgb}{0.93,0.36,0.26}
\definecolor {tomato3}             {rgb}{0.80,0.31,0.22}
\definecolor {tomato4}             {rgb}{0.55,0.21,0.15}
\definecolor {orangered2}          {rgb}{0.93,0.25,0.00}
\definecolor {orangered3}          {rgb}{0.80,0.22,0.00}
\definecolor {orangered4}          {rgb}{0.55,0.15,0.00}
\definecolor {red2}                {rgb}{0.93,0.00,0.00}
\definecolor {red3}                {rgb}{0.80,0.00,0.00}
\definecolor {red4}                {rgb}{0.55,0.00,0.00}
\definecolor {deeppink2}           {rgb}{0.93,0.07,0.54}
\definecolor {deeppink3}           {rgb}{0.80,0.06,0.46}
\definecolor {deeppink4}           {rgb}{0.55,0.04,0.31}
\definecolor {hotpink1}            {rgb}{1.00,0.43,0.71}
\definecolor {hotpink2}            {rgb}{0.93,0.42,0.65}
\definecolor {hotpink3}            {rgb}{0.80,0.38,0.56}
\definecolor {hotpink4}            {rgb}{0.55,0.23,0.38}
\definecolor {pink1}               {rgb}{1.00,0.71,0.77}
\definecolor {pink2}               {rgb}{0.93,0.66,0.72}
\definecolor {pink3}               {rgb}{0.80,0.57,0.62}
\definecolor {pink4}               {rgb}{0.55,0.39,0.42}
\definecolor {lightpink1}          {rgb}{1.00,0.68,0.73}
\definecolor {lightpink2}          {rgb}{0.93,0.64,0.68}
\definecolor {lightpink3}          {rgb}{0.80,0.55,0.58}
\definecolor {lightpink4}          {rgb}{0.55,0.37,0.40}
\definecolor {palevioletred1}      {rgb}{1.00,0.51,0.67}
\definecolor {palevioletred2}      {rgb}{0.93,0.47,0.62}
\definecolor {palevioletred3}      {rgb}{0.80,0.41,0.54}
\definecolor {palevioletred4}      {rgb}{0.55,0.28,0.36}
\definecolor {maroon1}             {rgb}{1.00,0.20,0.70}
\definecolor {maroon2}             {rgb}{0.93,0.19,0.65}
\definecolor {maroon3}             {rgb}{0.80,0.16,0.56}
\definecolor {maroon4}             {rgb}{0.55,0.11,0.38}
\definecolor {violetred1}          {rgb}{1.00,0.24,0.59}
\definecolor {violetred2}          {rgb}{0.93,0.23,0.55}
\definecolor {violetred3}          {rgb}{0.80,0.20,0.47}
\definecolor {violetred4}          {rgb}{0.55,0.13,0.32}
\definecolor {magenta2}            {rgb}{0.93,0.00,0.93}
\definecolor {magenta3}            {rgb}{0.80,0.00,0.80}
\definecolor {magenta4}            {rgb}{0.55,0.00,0.55}
\definecolor {orchid1}             {rgb}{1.00,0.51,0.98}
\definecolor {orchid2}             {rgb}{0.93,0.48,0.91}
\definecolor {orchid3}             {rgb}{0.80,0.41,0.79}
\definecolor {orchid4}             {rgb}{0.55,0.28,0.54}
\definecolor {plum1}               {rgb}{1.00,0.73,1.00}
\definecolor {plum2}               {rgb}{0.93,0.68,0.93}
\definecolor {plum3}               {rgb}{0.80,0.59,0.80}
\definecolor {plum4}               {rgb}{0.55,0.40,0.55}
\definecolor {mediumorchid1}       {rgb}{0.88,0.40,1.00}
\definecolor {mediumorchid2}       {rgb}{0.82,0.37,0.93}
\definecolor {mediumorchid3}       {rgb}{0.71,0.32,0.80}
\definecolor {mediumorchid4}       {rgb}{0.48,0.22,0.55}
\definecolor {darkorchid1}         {rgb}{0.75,0.24,1.00}
\definecolor {darkorchid2}         {rgb}{0.70,0.23,0.93}
\definecolor {darkorchid3}         {rgb}{0.60,0.20,0.80}
\definecolor {darkorchid4}         {rgb}{0.41,0.13,0.55}
\definecolor {purple1}             {rgb}{0.61,0.19,1.00}
\definecolor {purple2}             {rgb}{0.57,0.17,0.93}
\definecolor {purple3}             {rgb}{0.49,0.15,0.80}
\definecolor {purple4}             {rgb}{0.33,0.10,0.55}
\definecolor {mediumpurple1}       {rgb}{0.67,0.51,1.00}
\definecolor {mediumpurple2}       {rgb}{0.62,0.47,0.93}
\definecolor {mediumpurple3}       {rgb}{0.54,0.41,0.80}
\definecolor {mediumpurple4}       {rgb}{0.36,0.28,0.55}
\definecolor {thistle1}            {rgb}{1.00,0.88,1.00}
\definecolor {thistle2}            {rgb}{0.93,0.82,0.93}
\definecolor {thistle3}            {rgb}{0.80,0.71,0.80}
\definecolor {thistle4}            {rgb}{0.55,0.48,0.55}
\definecolor {gray1}               {rgb}{0.01,0.01,0.01}
\definecolor {gray2}               {rgb}{0.02,0.02,0.02}
\definecolor {gray3}               {rgb}{0.03,0.03,0.03}
\definecolor {gray4}               {rgb}{0.04,0.04,0.04}
\definecolor {gray5}               {rgb}{0.05,0.05,0.05}
\definecolor {gray6}               {rgb}{0.06,0.06,0.06}
\definecolor {gray7}               {rgb}{0.07,0.07,0.07}
\definecolor {gray8}               {rgb}{0.08,0.08,0.08}
\definecolor {gray9}               {rgb}{0.09,0.09,0.09}
\definecolor {gray10}              {rgb}{0.10,0.10,0.10}
\definecolor {gray11}              {rgb}{0.11,0.11,0.11}
\definecolor {gray12}              {rgb}{0.12,0.12,0.12}
\definecolor {gray13}              {rgb}{0.13,0.13,0.13}
\definecolor {gray14}              {rgb}{0.14,0.14,0.14}
\definecolor {gray15}              {rgb}{0.15,0.15,0.15}
\definecolor {gray16}              {rgb}{0.16,0.16,0.16}
\definecolor {gray17}              {rgb}{0.17,0.17,0.17}
\definecolor {gray18}              {rgb}{0.18,0.18,0.18}
\definecolor {gray19}              {rgb}{0.19,0.19,0.19}
\definecolor {gray20}              {rgb}{0.20,0.20,0.20}
\definecolor {gray21}              {rgb}{0.21,0.21,0.21}
\definecolor {gray22}              {rgb}{0.22,0.22,0.22}
\definecolor {gray23}              {rgb}{0.23,0.23,0.23}
\definecolor {gray24}              {rgb}{0.24,0.24,0.24}
\definecolor {gray25}              {rgb}{0.25,0.25,0.25}
\definecolor {gray26}              {rgb}{0.26,0.26,0.26}
\definecolor {gray27}              {rgb}{0.27,0.27,0.27}
\definecolor {gray28}              {rgb}{0.28,0.28,0.28}
\definecolor {gray29}              {rgb}{0.29,0.29,0.29}
\definecolor {gray30}              {rgb}{0.30,0.30,0.30}
\definecolor {gray31}              {rgb}{0.31,0.31,0.31}
\definecolor {gray32}              {rgb}{0.32,0.32,0.32}
\definecolor {gray33}              {rgb}{0.33,0.33,0.33}
\definecolor {gray34}              {rgb}{0.34,0.34,0.34}
\definecolor {gray35}              {rgb}{0.35,0.35,0.35}
\definecolor {gray36}              {rgb}{0.36,0.36,0.36}
\definecolor {gray37}              {rgb}{0.37,0.37,0.37}
\definecolor {gray38}              {rgb}{0.38,0.38,0.38}
\definecolor {gray39}              {rgb}{0.39,0.39,0.39}
\definecolor {gray40}              {rgb}{0.40,0.40,0.40}
\definecolor {gray42}              {rgb}{0.42,0.42,0.42}
\definecolor {gray43}              {rgb}{0.43,0.43,0.43}
\definecolor {gray44}              {rgb}{0.44,0.44,0.44}
\definecolor {gray45}              {rgb}{0.45,0.45,0.45}
\definecolor {gray46}              {rgb}{0.46,0.46,0.46}
\definecolor {gray47}              {rgb}{0.47,0.47,0.47}
\definecolor {gray48}              {rgb}{0.48,0.48,0.48}
\definecolor {gray49}              {rgb}{0.49,0.49,0.49}
\definecolor {gray50}              {rgb}{0.50,0.50,0.50}
\definecolor {gray51}              {rgb}{0.51,0.51,0.51}
\definecolor {gray52}              {rgb}{0.52,0.52,0.52}
\definecolor {gray53}              {rgb}{0.53,0.53,0.53}
\definecolor {gray54}              {rgb}{0.54,0.54,0.54}
\definecolor {gray55}              {rgb}{0.55,0.55,0.55}
\definecolor {gray56}              {rgb}{0.56,0.56,0.56}
\definecolor {gray57}              {rgb}{0.57,0.57,0.57}
\definecolor {gray58}              {rgb}{0.58,0.58,0.58}
\definecolor {gray59}              {rgb}{0.59,0.59,0.59}
\definecolor {gray60}              {rgb}{0.60,0.60,0.60}
\definecolor {gray61}              {rgb}{0.61,0.61,0.61}
\definecolor {gray62}              {rgb}{0.62,0.62,0.62}
\definecolor {gray63}              {rgb}{0.63,0.63,0.63}
\definecolor {gray64}              {rgb}{0.64,0.64,0.64}
\definecolor {gray65}              {rgb}{0.65,0.65,0.65}
\definecolor {gray66}              {rgb}{0.66,0.66,0.66}
\definecolor {gray67}              {rgb}{0.67,0.67,0.67}
\definecolor {gray68}              {rgb}{0.68,0.68,0.68}
\definecolor {gray69}              {rgb}{0.69,0.69,0.69}
\definecolor {gray70}              {rgb}{0.70,0.70,0.70}
\definecolor {gray71}              {rgb}{0.71,0.71,0.71}
\definecolor {gray72}              {rgb}{0.72,0.72,0.72}
\definecolor {gray73}              {rgb}{0.73,0.73,0.73}
\definecolor {gray74}              {rgb}{0.74,0.74,0.74}
\definecolor {gray75}              {rgb}{0.75,0.75,0.75}
\definecolor {gray76}              {rgb}{0.76,0.76,0.76}
\definecolor {gray77}              {rgb}{0.77,0.77,0.77}
\definecolor {gray78}              {rgb}{0.78,0.78,0.78}
\definecolor {gray79}              {rgb}{0.79,0.79,0.79}
\definecolor {gray80}              {rgb}{0.80,0.80,0.80}
\definecolor {gray81}              {rgb}{0.81,0.81,0.81}
\definecolor {gray82}              {rgb}{0.82,0.82,0.82}
\definecolor {gray83}              {rgb}{0.83,0.83,0.83}
\definecolor {gray84}              {rgb}{0.84,0.84,0.84}
\definecolor {gray85}              {rgb}{0.85,0.85,0.85}
\definecolor {gray86}              {rgb}{0.86,0.86,0.86}
\definecolor {gray87}              {rgb}{0.87,0.87,0.87}
\definecolor {gray88}              {rgb}{0.88,0.88,0.88}
\definecolor {gray89}              {rgb}{0.89,0.89,0.89}
\definecolor {gray90}              {rgb}{0.90,0.90,0.90}
\definecolor {gray91}              {rgb}{0.91,0.91,0.91}
\definecolor {gray92}              {rgb}{0.92,0.92,0.92}
\definecolor {gray93}              {rgb}{0.93,0.93,0.93}
\definecolor {gray94}              {rgb}{0.94,0.94,0.94}
\definecolor {gray95}              {rgb}{0.95,0.95,0.95}
\definecolor {gray97}              {rgb}{0.97,0.97,0.97}
\definecolor {gray98}              {rgb}{0.98,0.98,0.98}
\definecolor {gray99}              {rgb}{0.99,0.99,0.99}
\definecolor {darkgrey}            {rgb}{0.66,0.66,0.66}
\newcommand{\RSNOTE}[1]{\marginpar{\textcolor{darkgreen}{\textbf{RS: }
      {\footnotesize #1}}}}
\renewcommand{\RSNOTE}[1]{}
\newcommand{%
\ifpdf
\input{.pdf_t}
\else
\input{.pstex_t}
\fi}[1]{%
\ifpdf
\input{#1.pdf_t}
\else
\input{#1.pstex_t}
\fi}
\newcommand{\smalltt}[1]{{\tt\footnotesize #1}}
\begin{document}

\title{
\LARGE
Software Model Checking via Large-Block Encoding
}

\author{
  Dirk Beyer \quad
  Alessandro Cimatti \quad
  Alberto Griggio \\
  M. Erkan Keremoglu \quad
  Roberto Sebastiani 
}

\reportnumber{SFU-CS-2009-09}
\date{April 29, 2009}

\makecover

\phantom{x}

\title{
\LARGE
Software Model Checking via Large-Block Encoding\,$^{\text{{\scriptsize 1}}}$
}

\author{
\begin{tabular}{ccccc}
           Dirk Beyer\,$^{\text{{\scriptsize 2}}}$ 
 & \!\!\!\!Alessandro Cimatti\,$^{\text{{\scriptsize 3}}}$ 
 & \!\!\!\!\!\!Alberto Griggio\,$^{\text{{\scriptsize 2,4}}}$ 
 & \!\!\!\!\!\!M. Erkan Keremoglu\,$^{\text{{\scriptsize 2}}}$ 
 & \!\!\!\!Roberto Sebastiani\,$^{\text{{\scriptsize 4}}}$ \\
           \small Simon Fraser Univ.
 & \!\!\!\!\small FBK-irst
 & \!\!\!\!\!\!\small Univ. of Trento \& Simon Fraser Univ.
 & \!\!\!\!\!\!\small Simon Fraser Univ.
 & \!\!\!\!\small Univ. of Trento
\end{tabular}
}

\markboth{}{}

\maketitle

\setcounter{page}{1}
\pagestyle{headings}

\footnotetext[1]{
 Technical Report SFU-CS-2009-09, DISI-09-026, FBK-irst-2009.04.005.
}

\footnotetext[2]{
 Supported in part by the Canadian NSERC grant RGPIN 341819-07 and by the SFU grant PRG 06-3.
$^{\text{{\scriptsize 3}}}$%
 Supported in part by the European Commission
 grant FP7-2007-IST-1-217069 COCONUT.
$^{\text{{\scriptsize 4}}}$%
 Supported in part by the SRC/GRC grant 2009-TJ-1880 WOLFLING and by the MIUR grant PRIN 20079E5KM8\_002.
}

\begin{abstract}
The construction and analysis of an abstract reachability tree (ART)
are the basis for a successful method for software verification.
The ART represents unwindings of the control-flow graph of the program.
Traditionally, a transition of the ART represents a single block of the program,
and therefore, we call this approach single-block encoding (SBE).
SBE may result in a huge number of program paths to be explored,
which constitutes a fundamental source of inefficiency.
We propose a generalization of the approach, 
in which transitions of the ART represent larger portions of the program;
we call this approach large-block encoding (LBE).
LBE may reduce the number of paths to be explored up to exponentially.
Within this framework, we also investigate symbolic representations: 
for representing abstract states, 
in addition to conjunctions as used in SBE, 
we investigate the use of arbitrary Boolean formulas; 
for computing abstract-successor states, 
in addition to Cartesian predicate abstraction as used in SBE, 
we investigate the use of Boolean predicate abstraction.
The new encoding leverages the efficiency of state-of-the-art SMT
solvers, which can symbolically compute abstract large-block successors.
Our experiments on benchmark C programs show that the
large-block encoding outperforms the single-block encoding.

\end{abstract}

\section{Introduction}
\label{sec-intro}

Software model checking is an effective technique for software verification. 
Several advances in the field have lead to tools that are able to verify
programs of considerable size, and show significant advantages over traditional
techniques in terms of precision of the analysis
(e.g., \slam~\cite{SLAM} and \blast~\cite{BLAST}).
However, efficiency and scalability remain major concerns in software model checking
and hamper the adaptation of the techniques in industrial practice.
A successful approach to software model checking
is based on the construction and analysis of an abstract reachability tree (ART),
and predicate abstraction is one of the favorite abstract domains.
The ART represents unwindings of the control-flow graph of the program.
The search is usually guided by the control flow of the program.
Nodes of the ART typically consist of the control-flow location,
the call stack, and formulas that represent the data states.
During the refinement process, the ART nodes are incrementally refined.

In the traditional ART approach, each program operation 
(assignment operation, assume operation, function call, function return)
is represented by a single edge in the ART.
Therefore, we call this approach \emph{single-block encoding} (SBE).
A fundamental source of inefficiency of this approach is the fact that the control-flow
of the program can induce a huge number of paths (and nodes) in the ART,
which are explored independently of each other.
 
We propose a novel, broader view on ART-based software model checking, 
where a much more compact abstract space is used,
resulting thus in a much smaller number of paths to be enumerated in the ART.
Instead of using edges that represent single program operations,
we encode entire parts of the program in one edge.
In contrast to SBE, we call our new approach \emph{large-block encoding} (LBE).
In general, the new encoding may result in an exponential reduction of the number of
ART nodes.

The generalization from SBE to LBE has two main consequences.
First, LBE requires a more general representation of
abstract states than SBE.
SBE is typically based on mere \emph{conjunctions} of predicates. 
Because the LBE approach summarizes large portions of the control flow,
conjunctions are not sufficient, and we need to use
\emph{arbitrary Boolean combinations} of predicates
to represent the abstract states.
Second, LBE requires a more accurate abstraction in
the abstract-successor computations.
Intuitively, an abstract edge represents many different
paths of the program,
and therefore it is necessary that the abstract-successor computations take the
relationships between the predicates into account.

In order to make this generalization practical, we rely on
efficient solvers for satisfiability modulo theories (SMT). 
In particular, enabling factors are the capability of performing
Boolean reasoning efficiently (e.g.,~\cite{seba_lazy_smt}), 
the availability of effective algorithms for abstraction computation
(e.g.,~\cite{allsmt,bddsmt}), 
and interpolation procedures to extract new predicates~\cite{TACAS08,CSIsat}. 

Considering Boolean abstraction and large-block encoding in addition to the
traditional techniques, we obtain the following interesting observations:
(i) whilst the SBE approach requires a large number of successor computations,
the LBE approach reduces the number of successor computations dramatically (possibly exponentially);
(ii) whilst Cartesian abstraction can be efficiently
computed with a linear number of SMT solver queries, 
Boolean abstraction is expensive to compute because it requires an enumeration
of all satisfiable assignments for the predicates.
Therefore, two combinations of the above strategies provide an interesting tradeoff:
The combination of SBE with Cartesian abstraction was successfully 
implemented by tools like \blast and \slam.
We investigate the combination of LBE with Boolean abstraction,
by first formally defining LBE in terms of a summarization of the control-flow automaton for the program,
and then implementing this LBE approach together with a Boolean predicate abstraction.
We evaluate the performance and precision by comparing it 
with the model checker \blast and with an own implementation
of the traditional approach.
Our own implementation of the SBE and LBE approach is 
integrated as a new component into \cpachecker~\cite{CPACHECKER}%
\footnote{Available at {\tt\footnotesize http://www.cs.sfu.ca/$\sim$dbeyer/CPAchecker}}.
The experiments show that our new approach outperforms the previous approach.

\begin{figure*}[t]
  \centering
  \newcommand{\subcaption}[1]{\fontsize{8}{10}\selectfont #1\normalsize}
  \begin{tabular}{lll}
    \raisebox{2mm}{\scalebox{0.7}{
        \begin{minipage}[b]{6cm}
          \verbatiminput{example2.c}
        \end{minipage}
      }}
    &
    \raisebox{0.4cm}{\includegraphics[scale=0.09]{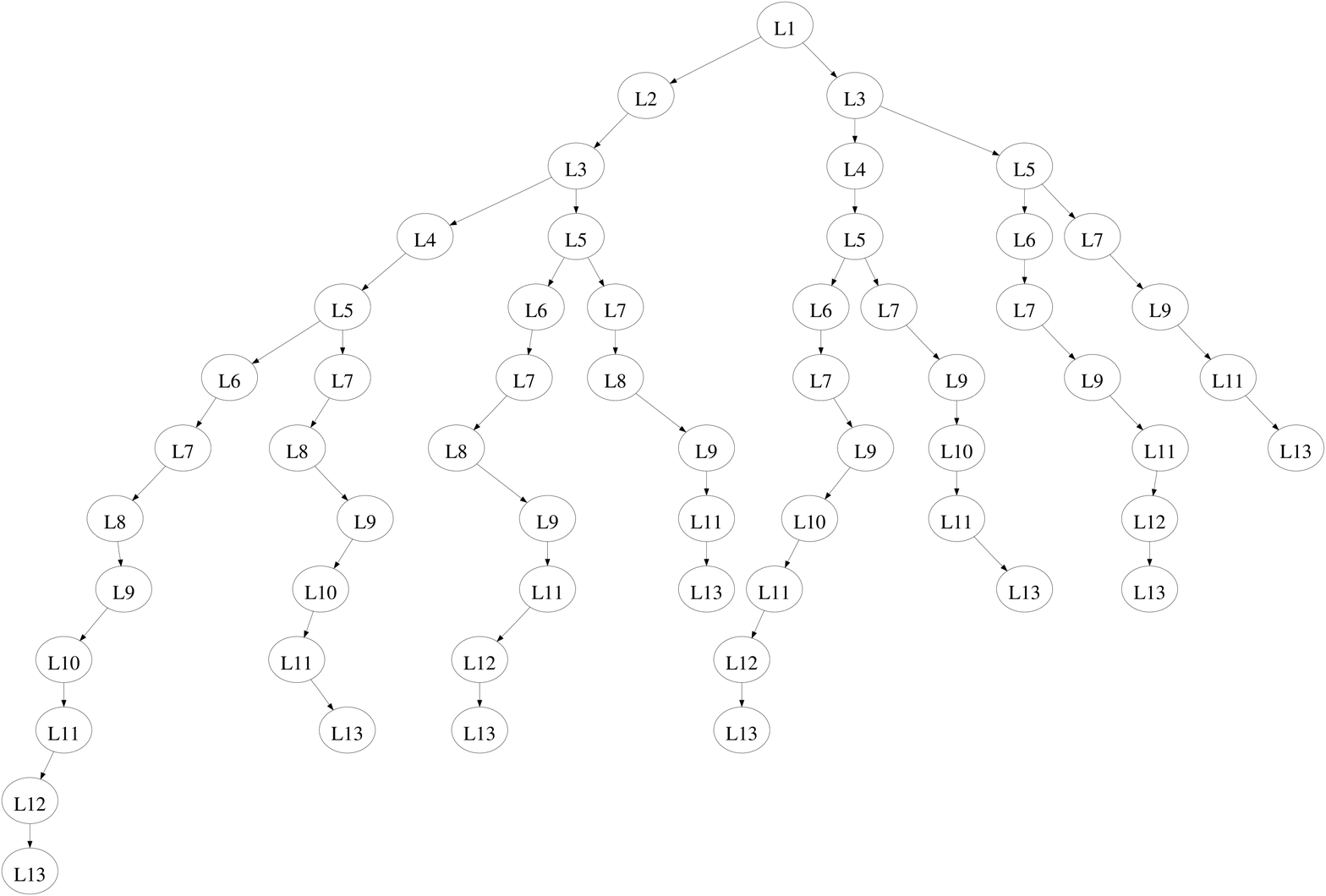}} 
    &
    \raisebox{0.2cm}{\includegraphics[scale=0.25]{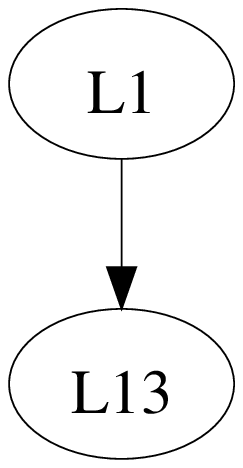}}
    \\
    \subcaption{(a) Example C program}
    &
    \subcaption{\hspace{1cm}(b) ART for SBE}
    &
    \subcaption{\hspace{-1.3cm}(c) ART for LBE}
  \end{tabular}
  \caption{
    \label{fig:example_blowup}
Example program and corresponding ARTs for SBE and LBE;
this example was mentioned as verification challenge for ART-based approaches 
by several colleagues.
}
\end{figure*}

\begin{example}
We illustrate the advantage of LBE over SBE
on the example program 
in Fig.~\ref{fig:example_blowup} (a).
In SBE, each program location is modeled explicitly,
and an abstract-successor computation is performed for each program operation.
Figure~\ref{fig:example_blowup}~(b) shows the structure of the resulting ART.
In the figure, abstract states are drawn as ellipses,
and labeled with the location of the abstract state;
the arrows indicate that there exists an edge from the source location
to the target location in the control-flow.
The ART represents all feasible program paths.
For example, the leftmost program path is taking the `then' branch of every `if' statement.
For every edge in the ART, an abstract-successor computation is performed,
which potentially includes several SMT solver queries.
The problems given to the SMT solver are usually very small,
and the runtime sums up over a large amount of simple queries.
Therefore, model checkers that are based on SBE (like \blast) 
experience serious performance problems on programs with such an exploding structure
(cf.~the {\small \tt test\_locks} examples in Table~\ref{tab:eval}).
In LBE, the control-flow graph is summarized,
such that control-flow edges represent entire subgraphs of the original control-flow.
In our example, most of the program is 
summarized into one control-flow edge.
Figure~\ref{fig:example_blowup}~(c) shows the structure of the resulting ART,
in which all the feasible paths of the program are represented by a single edge.
The exponential growth of the ART does not occur.
\end{example}

\smallsec{Related Work} 
The model checkers \slam and \blast are typical examples 
for the SBE approach~\cite{SLAM,BLAST},
both based on counterexample-guided abstraction refinement (CEGAR)~\cite{ClarkeCEGAR}.
Also the tool \satabs is based on CEGAR, 
but it performs a fully symbolic search in the abstract space~\cite{SATABS}.
In contrast, our approach still follows the lazy-abstraction
paradigm~\cite{LazyAbstraction}, 
but it abstracts and refines chunks of the program ``on-the-fly''.
The work of McMillan is also based on lazy abstraction, 
but instead of using predicate abstraction for the abstract domain,
Craig interpolants from infeasible error paths are directly used,
thus avoiding abstract-successor computations~\cite{McMillanCAV06}. 
A fundamentally different approach to software model checking
is bounded model checking (BMC), with the most prominent example CBMC \cite{CBMC}.
Programs are unrolled up to a given depth, and a formula is constructed
which is satisfiable iff one of the considered program
executions reaches a certain error location.
The analysis tool \calysto is an example of an ``extended static checker'',
following an approach similar to BMC
when generating verification conditions~\cite{CALYSTO}, 
while possibly abstracting away some irrelevant parts of the program.
The BMC approaches are targeted towards discovering bugs, 
and can not be used to prove program safety.

\smallsec{Structure}
Section~\ref{sec-background} provides the necessary background.
Section~\ref{sec-summarization} explains our contribution in detail.
We experimentally evaluate our novel
approach in Sect.~\ref{sec-experiments}. 
In Sect.~\ref{sec-conclusion}, we draw some conclusions and outline
directions for future research.

\markboth
{Beyer, Cimatti, Griggio, Keremoglu, Sebastiani: \quad
Software Model Checking via Large-Block Encoding}
{Beyer, Cimatti, Griggio, Keremoglu, Sebastiani: \quad
Software Model Checking via Large-Block Encoding}

\section{Background}
\label{sec-background}

\subsection{Programs and Control-Flow Automata}

We restrict the presentation to
a simple imperative programming language,
where all operations are either assignments or assume operations,
and all variables range over integers.%
\footnote{Our implementation is based on \cpachecker{},
which operates on C~programs that are given in
the {\sc Cil} intermediate language~\cite{CIL};
function calls are supported.}
We represent a program by a \emph{control-flow automaton} (CFA).
A CFA $A = (\locs, G)$ consists of
a set~$\locs$ of program locations, which model the program counter $\pc$
and a set $G \subseteq \locs \times Ops \times \locs$ of control-flow edges,
which model the operations that are executed when
control flows from one program location to another.
The set of variables that occur in operations from~$Ops$
is denoted by~$X$.
A~\emph{program}~$P = (A, \pci, \pct)$ consists of 
a CFA $A = (\locs, G)$ (which models the control flow of the program),
an initial program location~$\pci \in \locs$ (which models the program entry)
such that $G$ does not contain any edge $(\cdot, \cdot, \pci)$, and
a target program location~$\pct \in \locs$ (which models the error location).

A \emph{concrete data state} of a program is
a variable assignment $c: X \to \Ints$
that assigns to each variable an integer value.
The set of all concrete data states of a program is denoted by~\concr.
A set~$r \subseteq \concr$ of concrete data states is called \emph{region}.
We represent regions using first-order formulas (with free variables from $X$): 
a formula $\phi$ represents the set $S$ of all data states $c$ that imply it (i.e. $S = \{c \mid c \models \phi\}$).
A~\emph{concrete state} of a program is a pair~$(l, c)$
where $l \in \locs$ is a program location and $c$~is a concrete data state.
A~pair~$(l, \phi)$ represents the following set of all concrete states:
$\{ (l, c) \mid c \models \phi\}$.
The \emph{concrete semantics} of an operation $\op \in Ops$ is defined by the strongest postcondition operator~$\SP_\op$: for a formula $\phi$, $\SP_\op(\phi)$ represents the set of data states that are reachable from any of the states in region represented by $\phi$ after the execution of $\op$.
Given a formula~$\phi$ that represents a set of concrete data states,
for an 
assignment operation~$s := e$,
we have 
${\SP_{s := e}(\phi) =
    \exists \widehat{s}:
    \phi_{[s \mapsto \widehat{s}]} \land 
    (s = e_{[s \mapsto \widehat{s}]})}$;
and for an 
assume operation~$\mathit{assume}(p)$,
we have 
$\SP_{\mathit{assume}(p)}(\phi) = \phi \land p$.

A \emph{path}~$\path$ is a sequence~$\seq{(\op_1, \pc_1), ..., (\op_n, \pc_n)}$
of pairs of operations and locations.
The path~$\path$ is called \emph{program path} if for every $i$ with $1 \leq i \leq n$
there exists a CFA edge~$g = (\pc_{i-1}, \op_i, \pc_i)$,
i.e., $\path$ represents a syntactical walk through the CFA.
The \emph{concrete semantics for a program path}
$\path = \seq{(\op_1, \pc_1), ..., (\op_n, \pc_n)}$
is defined as the successive application of the strongest postoperator
for each operation: 
$\SP_\path(\phi) = \SP_{\op_n}( ... \SP_{\op_1}(\phi) ... )$. 
The set of concrete states that result from running $\path$
is represented by the pair~$(l_n, \SP_\path(\true))$.
A program path $\path$ is \emph{feasible} if $\SP_\path(\true)$ is satisfiable.
 A concrete state~$(l_n, c_n)$ is called \emph{reachable} 
if there exists a feasible program path $\path$
whose final location is $l_n$ and such that $c_n \models \SP_\path(\true)$.
A location $l$ is reachable if there exists a concrete state $c$ such that $(l, c)$ is reachable.
A program is \emph{safe} if $\pct$ is not reachable.

\subsection{Predicate Abstraction}

Let $\preds$ be a set of predicates over program variables in
a quantifier-free theory~${\cal T}$.
A~\emph{formula}~$\phi$ is a Boolean combination of predicates from~$\preds$.
A~\emph{precision for a formula} is a finite subset~$\prec \subset \preds$ of predicates.

\smallsec{Cartesian Predicate Abstraction}
Let~$\prec$ be a precision.
The \emph{Cartesian predicate abstraction~$\abs{\phi}_\cart^\prec$ of a formula~$\phi$}
is the strongest conjunction of predicates from~$\prec$ 
entailed by $\phi$:
$\abs{\phi}_\cart^\prec := \bigwedge~ \{p \in \prec \mid \phi \implies p\}$.
Such a predicate abstraction of a formula~$\phi$
that represents a region of concrete program states,
is used as an \emph{abstract state} (i.e., an abstract representation of the region)
in program verification.
For a formula $\phi$ and a precision $\prec$,
the Cartesian predicate abstraction~$\abs{\phi}_\cart^\prec$ of~$\phi$
can be computed by $|\prec|$~SMT-solver queries.
The abstract strongest postoperator~$\SP^\prec$ 
for a predicate abstraction~$\prec$
transforms the abstract state $\abs{\phi}^\prec_\cart$ into 
its successor~$\abs{\phi'}^\prec_\cart$ for a program operation~$\op$,
written as $\abs{\phi'}^\prec_\cart = \SP^\prec_\op(\abs{\phi}^\prec_\cart)$, if
$\abs{\phi'}^\prec_\cart$
is the Cartesian predicate abstraction of $\SP_\op(\abs{\phi}^\prec_\cart)$,
i.e., $\abs{\phi'}^\prec_\cart = (\SP_\op(\abs{\phi}^\prec_\cart))_\cart^\prec$.
For more details, we refer the reader to the work of Ball et al.~\cite{BPR01}.

\smallsec{Boolean Predicate Abstraction}
Let~$\prec$ be a precision.
The \emph{Boolean predicate abstraction~$\abs{\phi}_\bool^\prec$ of a formula~$\phi$}
is the strongest Boolean combination of predicates from~$\prec$ 
that is entailed by $\phi$.
For a formula~$\phi$ and a precision~$\prec$,
the Boolean predicate abstraction~$\abs{\phi}_\bool^\prec$ of~$\phi$ 
can be computed by querying an SMT solver in the following way:
For each predicate~$p_i \in \prec$, we introduce a propositional variable~$v_i$.
Now we ask an SMT solver to enumerate all satisfying assignments
of~$v_1, ..., v_{|\prec|}$
in the formula
$\phi \land \bigwedge_{p_i \in \prec} (p_i \iff v_i)$.
For each satisfying assignment, we construct a conjunction
of all predicates from $\prec$ whose corresponding propositional
variable occurs positive in the assignment.
The disjunction of all such conjunctions is the Boolean predicate abstraction for~$\phi$.
The abstract strongest postoperator~$\SP^\prec$ 
for a predicate abstraction~$\prec$
transforms the abstract state $\abs{\phi}^\prec_\bool$ into 
its successor~$\abs{\phi'}^\prec_\bool$ for a program operation~$\op$,
written as $\abs{\phi'}^\prec_\bool = \SP^\prec_\op(\abs{\phi}^\prec_\bool)$, if
$\abs{\phi'}^\prec_\bool$
is the Boolean predicate abstraction of $\SP_\op(\abs{\phi}^\prec_\bool)$,
i.e., $\abs{\phi'}^\prec_\bool = (\SP_\op(\abs{\phi}^\prec_\bool))_\bool^\prec$.
For more details, we refer the reader to the work of Lahiri et al.~\cite{allsmt}.

\subsection{ART-based Software Model Checking with SBE}

The \emph{precision for a program} is a function~$\PREC: L \to 2^\preds$,
 which assigns to each program location a precision for a formula.
An ART-based algorithm for software model checking
takes an initial precision $\PREC$ (which is typically very coarse) for the predicate abstraction,
and constructs an ART for the input program and $\PREC$.
An ART is a tree whose nodes are labeled with program locations and
abstract states~\cite{BLAST} (i.e., $n=(l,\phi)$).
For a given ART node, all children nodes are labeled with successor locations and
abstract successor states, according to the strongest postoperator and the predicate abstraction.
A node~$n=(l,\phi)$ is called \emph{covered} if there exists another
ART node $n'=(l,\phi')$ that entails~$n$ (i.e., s.t. $\phi'\models\phi$).
An ART is called \emph{complete} if every node is either covered or all possible abstract successor states
are present in the ART as children of the node.
If a complete ART is constructed and the ART does not contain any error node,
then the program is considered correct \cite{LazyAbstraction}.
If the algorithm adds an error node to the ART, then
the corresponding path $\path$ is checked to determine if $\path$ is feasible
(i.e., if the corresponding concrete program path is executable)
or infeasible (i.e., if there is no corresponding program execution).
In the former case the path represents a witness for a program bug.
In the latter case the path is analyzed, and a refinement $\PREC'$ of $\PREC$ is generated,
such that the same path cannot occur again during the ART exploration.
The concept of using an infeasible error path for abstraction refinement
is called counterexample-guided abstraction refinement (CEGAR)~\cite{ClarkeCEGAR}.
The concept of iteratively constructing an ART and refining
only the precisions along the considered path is called lazy abstraction~\cite{LazyAbstraction}.
Craig interpolation is a successful approach to 
predicate extraction for refinement~\cite{AbstractionsFromProofs}.
After the refining the precision, the algorithm
continues with the next iteration, using $\PREC'$ instead of $\PREC$ to construct the ART,
until either a complete error-free ART is obtained, or an error is found
(note that the procedure might not terminate).
For more details and a more in-depth illustration of the overall ART algorithm, 
we refer the reader to the \blast article~\cite{BLAST}.

In order to make the algorithm scale on practical examples,
implementations such as \blast or \slam
use the simple but coarse Cartesian abstraction,
instead of the expensive but precise Boolean abstraction.
Despite its potential imprecision, Cartesian abstraction has been proved successful
for the verification of many real-world programs.
In the SBE approach, given the large number of successor computations,
the computation of the Boolean predicate abstraction is in fact
too expensive, as it may require an SMT solver to enumerate
an exponential number of assignments on the predicates in the precision,
for each single successor computation.
The reason for the success of Cartesian abstraction if used together with SBE, 
is that for a given program path,
state overapproximations that are expressible as conjunctions of atomic predicates 
---for which Boolean and Cartesian abstractions are equivalent---
are often good enough to prove that the error location is not reachable in the abstract space.

\section{Large-Block Encoding}
\label{sec-summarization}

\subsection{Summarization of Control-Flow Automata}

The first, main step of LBE is 
the summarization of the program CFA, 
in which each large control-flow subgraph 
that is free of loops
is replaced by a single control-flow edge with a large formula that
represents the removed subgraph. 
This process, which we call {\em summarization} of the CPA, 
consists of the fixpoint application of three rewriting rules that we describe
below: first we apply Rule~0 once, and then we    
repeatedly apply Rules 1 and 2, until no rule is applicable anymore.

Let~$P = (A, \pci, \pct)$ be a program with~CFA $A = (\locs, G)$.

\smallsec{Rule 0 (Error Sink)}
We remove all edges~$(\pct, \cdot, \cdot)$ from G,
i.e., the target location~$\pct$ becomes a sink node with no outgoing edges.

\smallsec{Rule 1 (Sequence)}
\newcommand{\Gout}[1]{\ensuremath{G_{#1}^\rightarrow}}
If $G$ contains an edge $(l_1, \op_1, l_2)$ with $l_1 \neq l_2$ 
\begin{wrapfigure}[7]{r}{4cm}
  \vspace{-5mm}
  \scalebox{0.4}{%
\ifpdf
\input{rule1_example.pdf_t}
\else
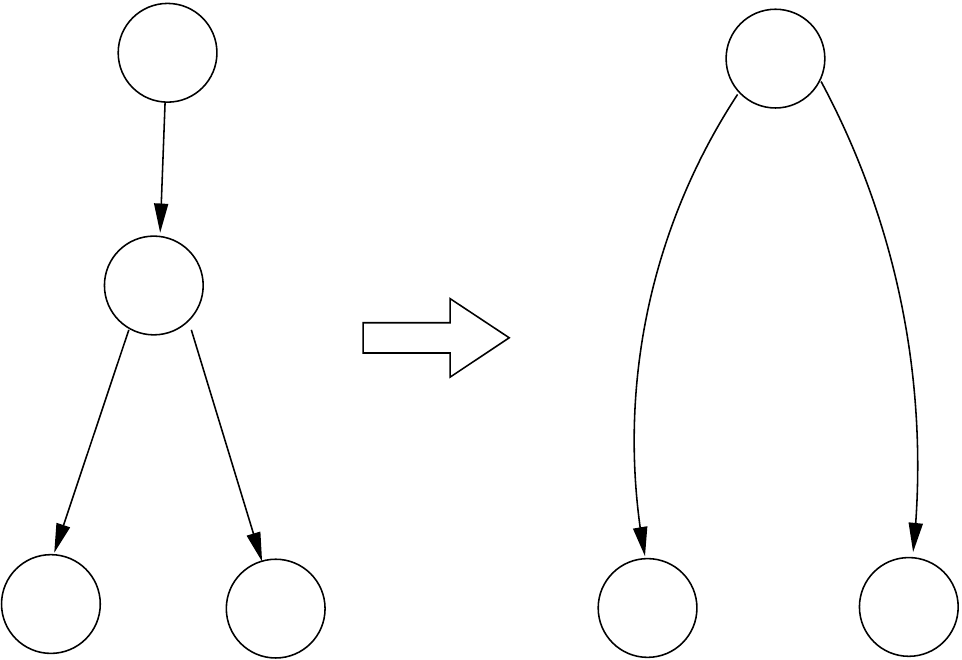
\fi}
\end{wrapfigure}
and no other incoming edges for $l_2$ (i.e. edges $(\cdot, \cdot, l_2)$),
and $\Gout{l_2}$ is the subset of $G$ of outgoing edges for $l_2$,
then we change the CFA~$A$ in the following way:
(1) we remove location~$l_2$ from~$\locs$, and
(2) we remove the edges $(l_1, \op_1, l_2)$ and all the edges in $\Gout{l_2}$ from $G$,
and for each edge $(l_2, \op_i, l_i) \in \Gout{l_2}$, we add the edge 
$(l_1, \op_1 \sequence \op_i, l_i)$ to $G$,
where 
$\SP_{\op_1 \sequence \op_i}(\phi) = \SP_{\op_i}(\SP_{\op_1}(\phi))$.
(Note that $\Gout{l_2}$ might contain an edge $(l_2, \cdot, l_1)$.)

\smallsec{Rule 2 (Choice)}
If $\locs_2 = \{l_1, l_2\}$ and $A_{|\locs_2} = (\locs_2, G_2)$ 
\begin{wrapfigure}[5]{r}{3.5cm}
  \vspace{-3mm}
  \scalebox{0.4}{%
\ifpdf
\input{rule2_example.pdf_t}
\else
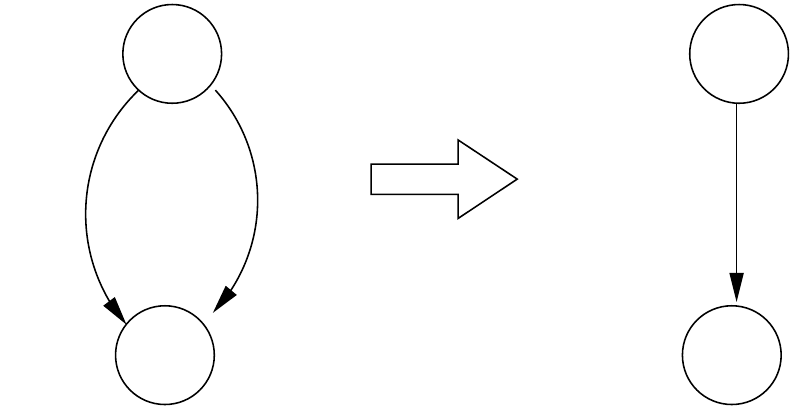
\fi}
\end{wrapfigure}
is the subgraph of~$A$ with nodes from~$\locs_2$
and the set~$G_2$ of edges contains the two edges
$(l_1, \op_1, l_2)$ and $(l_1, \op_2, l_2)$,
then we change the CFA~$A$ in the following way:
(1) we remove the two edges $(l_1, \op_1, l_2)$ and $(l_1, \op_2, l_2)$ from G
    and add the edge $(l_1, \op_1 \parallel \op_2, l_2)$ to G,
where 
$\SP_{\op_1 \parallel \op_2}(\phi) = \SP_{\op_1}(\phi) \lor \SP_{\op_2}(\phi)$.
(Note that there might be a backwards edge $(l_2, \cdot, l_1)$.)

Let $P = (A, \pci, \pct)$ be a program
and let $A'$ be another CFA for~$P$.
The CFA~$A'$ is the \emph{summarization} of $A$ if
$A'$ is obtained from $A$ via stepwise application of the two rules,
and none of the two rules can be further applied.

\begin{example}
Figure~\ref{fig:rules_example2} shows a program (a) and its corresponding CFA (b).
The control-flow automaton (CFA) is stepwise transformed to its summarization CFA (h)
as follows:
Rule~1 eliminates location 6 to (c),
Rule~1 eliminates location 3 to (d),
Rule~1 eliminates location 4 to (e),
Rule~2 eliminates one edge 2--5 to (f),
Rule~1 eliminates location 5 to (g),
Rule~1 eliminates location 2 to (h).
\end{example}

\begin{figure*}[t]
\newcommand{\exampleprog}{
\raisebox{-5cm}{
  \parbox{20em}{
{\tt L1:~while~(i>0)~\{}\\
{\tt L2:~~~if~(x==1)~\{}\\
{\tt L3:~~~~~z~=~0;}\\
{\tt \phantom{L0:}~~~\}~else~\{}\\
{\tt L4:~~~~~z~=~1;}\\
{\tt \phantom{L0:}~~~\}}\\
{\tt L5:~~~i = i-1;}\\
{\tt L6:~\}}\\
}
}
}
\centering
   \scalebox{0.3}{%
\ifpdf
\input{rules_example2.pdf_t}
\else
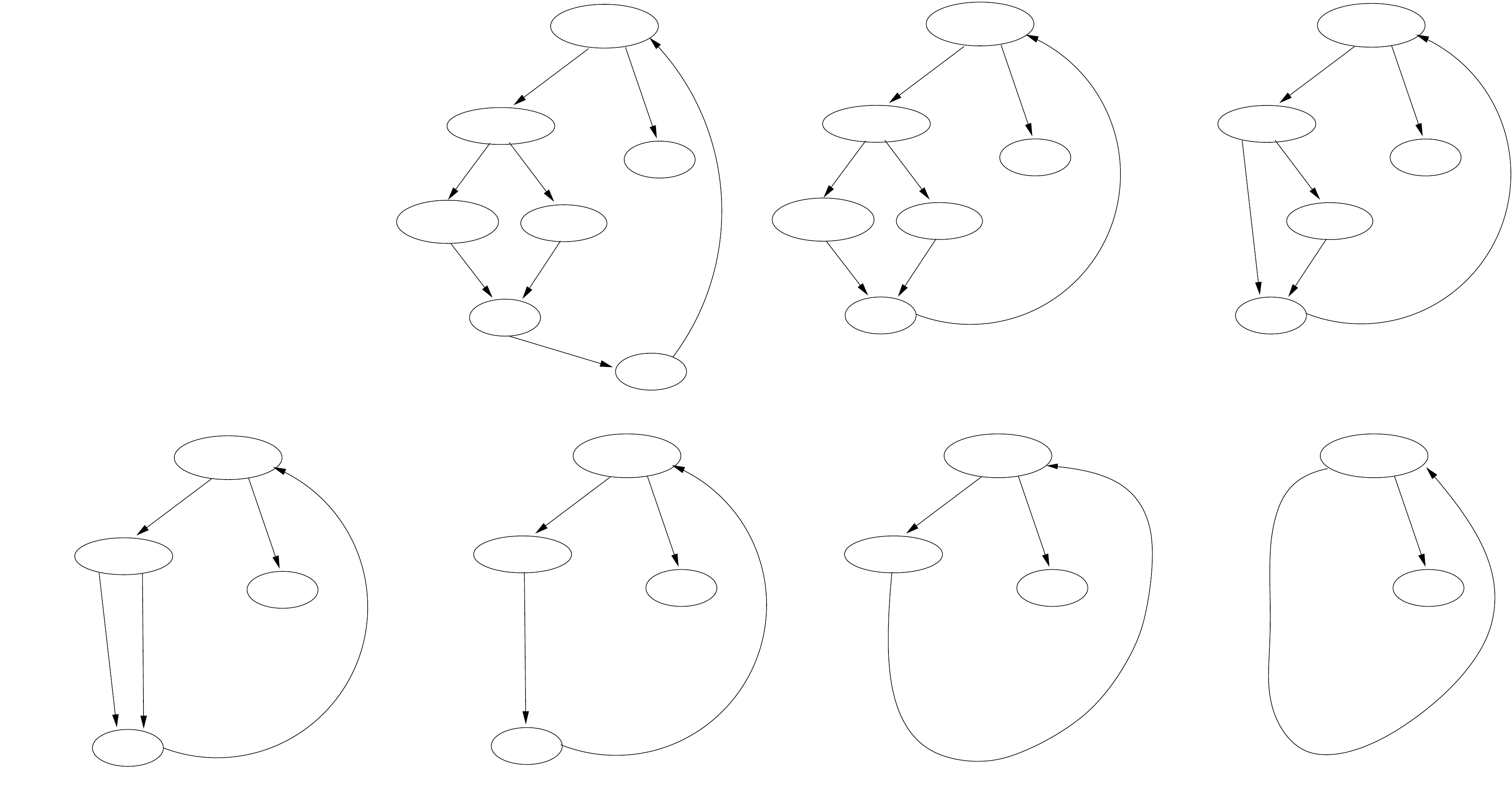
\fi}
   \caption{CFA Transformation: a) Program, b) CFA, c)--g) Intermediate CFAs, h) Summarization CFA. 
     In the CFAs, $assume(p)$ is represented as $[p]$, 
     $\op_1 \sequence \op_2$ is represented 
     by putting $\op_2$ under $\op_1$,
     and $\op_1 \parallel \op_2$ by putting $\op_2$ beside $\op_1$.
   \label{fig:rules_example2}}
\end{figure*}

In the context of this article, we use the summarization CFA for program analysis,
i.e., we want to verify if an error state of the program is reachable.
The following theorem, which is proved in
Appendix~\ref{sec:appendix_proof}, states that our summarization of a
CFA is correct in this sense. 
 
\begin{theorem}[Correctness of Summarization]\label{thm:summary_correctness}
Let $P = (A, \pci, \pct)$ be a program and
let $A' = (\locs', G')$ be the summarization of~$A$.
Then: 
(i) $\{\pci, \pct\} \subseteq \locs'$, and 
(ii) $\pct$ is reachable in $(A', \pci, \pct)$ if and only if 
$\pct$ is reachable in $P$.
\end{theorem}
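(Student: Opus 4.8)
The proof proposal is to show that each single rewriting rule (Rule 0, Rule 1, Rule 2) preserves both properties, and then conclude by induction on the number of rule applications. For part (i), I would observe that Rule 0 only deletes outgoing edges of $\pct$ and never deletes a location; Rule 1 deletes a location $l_2$ only when $l_1 \neq l_2$ and $l_2$ has a unique incoming edge, and since $\pci$ has no incoming edges it can never play the role of $l_2$, while $\pct$ has no outgoing edges (after Rule 0) so it can never be the tail of the edge $(l_1,\op_1,l_2)$ whose elimination merges through $l_2$ — hence neither $\pci$ nor $\pct$ is ever removed; Rule 2 removes no location at all. So $\{\pci,\pct\}\subseteq\locs'$ is maintained throughout.

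**Core of the argument (part (ii)).** The key is to prove that a single application of any rule preserves the set of \emph{feasible program paths from $\pci$ to $\pct$}, in the sense that there is a feasible program path ending in $\pct$ in the CFA before the rewrite if and only if there is one after. I would make this precise by relating program paths across a rewrite step: if $A \leadsto \bar A$ by one rule, I define a translation of program paths of $A$ into program paths of $\bar A$ and vice versa, and show it preserves the composed strongest-postcondition $\SP_\path(\true)$ up to logical equivalence, hence preserves feasibility. For Rule 1 (Sequence), a program path of $A$ that traverses $\dots,(\op_1,l_2),(\op_i,l_i),\dots$ corresponds to the path of $\bar A$ that uses the merged edge $(l_1,\op_1\sequence\op_i,l_i)$; since $l_2$ has no other incoming edge, every visit to $l_2$ in an $A$-path is immediately preceded by $\op_1$, so the correspondence is a bijection on the relevant paths, and $\SP_{\op_1\sequence\op_i} = \SP_{\op_i}\circ\SP_{\op_1}$ by definition makes the two $\SP$ compositions literally equal. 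For Rule 2 (Choice), an $A$-path using either $(l_1,\op_1,l_2)$ or $(l_1,\op_2,l_2)$ at some point corresponds to the $\bar A$-path using $(l_1,\op_1\parallel\op_2,l_2)$ at that point; here $\SP_{\op_1\parallel\op_2}(\phi) = \SP_{\op_1}(\phi)\lor\SP_{\op_2}(\phi)$, so a feasible $A$-path gives a feasible $\bar A$-path (the disjunct is satisfiable), and conversely a feasible $\bar A$-path means at least one disjunct is satisfiable along the way, yielding a feasible $A$-path choosing that branch. Rule 0 only removes edges leaving $\pct$, which cannot lie on any path whose \emph{final} location is $\pct$ except as a spurious suffix, so it does not change reachability of $\pct$. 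Combining: $\pct$ is reachable in $A$ iff some feasible program path ends at $\pct$, iff (by the path correspondence, applied inductively over all rewrite steps) some feasible program path ends at $\pct$ in $A'$, iff $\pct$ is reachable in $(A',\pci,\pct)$.

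**Subtleties to handle carefully.** The delicate points are the self-loop remarks parenthesized in the rule statements. In Rule 1, $\Gout{l_2}$ may contain an edge $(l_2,\cdot,l_1)$, so the merged edges may include $(l_1,\op_1\sequence\op_i,l_1)$; the path correspondence still works because any maximal run through $l_2$ in $A$ still enters $l_2$ only via $\op_1$. In Rule 2, there may be a backward edge $(l_2,\cdot,l_1)$, which Rule 2 leaves untouched — so I must be careful that the subgraph condition "$G_2$ contains the two edges $(l_1,\op_1,l_2)$ and $(l_1,\op_2,l_2)$" is about parallel edges only, and the correspondence is purely local to those two edges. The main obstacle I anticipate is bookkeeping the induction cleanly: formally, I would phrase the invariant as "for the current CFA $\bar A$, $\pct$ is reachable in $(\bar A,\pci,\pct)$ iff $\pct$ is reachable in $P$," prove it is preserved by each of Rules 0, 1, 2 via the local path translations above, and note termination of the rewriting is irrelevant to correctness of the equivalence (though it is needed for $A'$ to be well-defined, and follows since each of Rules 1 and 2 strictly decreases $|\locs|+|G|$). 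The rest is the routine verification that $\SP$ compositions match under the path translation, which I would not spell out in full.
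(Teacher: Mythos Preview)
Your proposal is correct and takes essentially the same approach as the paper: induction over single rule applications, with local path correspondences that preserve $\SP$, plus the observation that Rule~1 cannot delete $\pci$ (no incoming edge) or $\pct$ (no outgoing edge after Rule~0). The paper merely packages part~(ii) as three lemmas---$\SP_{\op}$ distributes over disjunction, a soundness direction ($\SP_\path(\varphi)\models\SP_{\path'}(\varphi)$ for some $A'$-path $\path'$), and a completeness direction ($\SP_{\path'}(\varphi)\equiv\bigvee_{\path\in\Sigma}\SP_\path(\varphi)$ for a set $\Sigma$ of $A$-paths)---but the content is identical; the distributivity lemma is exactly the ``routine verification'' you defer, and it is what makes your Rule~2 backward step (``at least one disjunct is satisfiable along the way'') rigorous when the merged edge is followed by further operations.
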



The summarization can be performed in polynomial time.
The time taken by Rule 0 is proportional to the number of outgoing edges for $\pct$.
Since each application of Rule 1 or Rule 2 removes at least one edge, 
there can be at most $|G|-1$ such applications. 
A naive way to determine the set of locations and edges to which to apply each rule
requires $O(|V|\cdot k)$ time, where $k$ is the maximum out-degree of locations.
Finally, each application of Rule 2 requires $O(1)$ time, 
and each application of Rule 1 $O(k)$ time.
Therefore, a naive summarization algorithm requires $O(|G|\cdot
|V|\cdot k)$ time, which reduces to $O(|G|\cdot|V|)$ if $k$ is bounded
(i.e., if we rewrite a priori all {\tt switch}es  into nested {\tt if}s).%
\footnote{In our implementation, we use a more efficient algorithm, 
which we do not describe here for lack of space.}

\subsection{LBE versus SBE for Software Model Checking}

The use of LBE instead of the standard SBE
requires no modification to the general model-checking algorithm,
which is still based on the construction of an ART with CEGAR-based refinement.
The main difference is that the LBE has no 
one-to-one correspondence between ART paths and syntactical program paths.
A single CFA edge corresponds to a \emph{set of paths} between its source and target location,
and a single ART path corresponds to a \emph{set of program paths}; 
an ART node represents an overapproximation of the data region 
that is reachable by following \emph{any} of the program paths 
represented by the ART path that leads to it.
This difference leads to two observations.

First, LBE can lead to exponentially-smaller ARTs than SBE, and thus it
can drastically reduce the number successor computations (cf.\ example in
Sect.~\ref{sec-intro})
and the number of abstraction-refinement steps for infeasible error paths.
Each of these operations, however, is typically more expensive than with SBE, 
because the formulas involved are larger and have a more complex structure.

Second, LBE requires a more general representation of abstract states. 
When using SBE, abstract states are typically represented as sets/conjunctions of predicates. 
This is sufficient for practical examples because each abstract state 
represents a data region reachable by a single program path, 
which can be encoded essentially as a conjunction of atomic formulas.
With LBE, such representation would be too coarse,
since each abstract state represents a data region that is reachable on several different program paths.
Therefore, we need to use a representation for arbitrary (and larger) Boolean combinations of predicates.
This generalization of the representation of the abstract state requires 
a generalization of the representation of the transition, i.e.,
the replacement of the Cartesian abstraction 
with a more precise form of abstraction.
In this paper, we evaluate the use of the Boolean abstraction,
which allows for a precise representation of arbitrary Boolean combinations of predicates.

With respect to the traditional SBE approach, 
LBE allows us to trade part of the cost of the \emph{explicit} enumeration of program paths 
with that of the \emph{symbolic} computation of abstract successor states:
rather than having to build large ARTs 
by performing a substantial amount of relatively cheap operations
(Cartesian abstract postoperator applications along single edges
and counterexample analysis of individual program paths),
with LBE we build smaller ARTs by performing more expensive symbolic operations
(Boolean abstract postoperator applications along large portions of the control flow and counterexample analysis of multiple program paths),
involving formulas with a complex Boolean structure.
With LBE, the \emph{cost} of each symbolic operation, rather than their \emph{number},
becomes a critical performance factor.

To this extent, LBE makes it possible to fully exploit the power and
functionality of modern SMT solvers:
First, the capability of modern SMT solvers of performing
large amounts of Boolean 
reasoning allows for handling possibly-big Boolean
combinations of atomic expressions, instead of simple conjunctions.  
Second, the capability of some SMT solvers to perform
All-SMT and interpolant computation (see, e.g.,
\cite{mathsat4}) allows for effectively performing 
SMT-based Boolean abstraction computation \cite{allsmt,bddsmt}
and interpolation-based counterexample analysis \cite{TACAS08} respectively, 
which was shown to outperform previous approaches, 
especially when dealing with complex formulas. 
With SBE, instead, the use of modern SMT technology does not lead to
significant improvements of the whole ART-based algorithm,  
because each SMT query involves simple (and often small) conjunctions 
only.

\section{Performance Evaluation\label{sec-experiments}}

\smallsec{Implementation}
In order to evaluate the proposed verification method,
we integrate our algorithm as a new component into the 
configurable software verification toolkit \cpachecker~\cite{CPACHECKER}.
This implementation is written in {\sc Java}.
All example programs are preprocessed 
and transformed into the simple intermediate language~\cil~\cite{CIL}.
For parsing C~programs, \cpachecker uses a library from the
Eclipse C/C++ Development Kit.
For efficient querying of formulas in the quantifier-free theory of 
rational linear arithmetic and equality with uninterpreted function symbols,
we leverage the SMT solver \mathsat~\cite{mathsat4},
which is integrated as a library (written in C++). 
We use binary decision diagrams (BDDs) for the representation of abstract-state formulas.

We run all experiments on 
a 1.8\,GHz Intel Core2 machine with 2\,GB of RAM and 2\,MB of cache, 
running GNU/Linux. 
We used a timeout of 1\,800\,s and a memory limit of 1.8\,GB.

\smallsec{Example Programs}
We use three categories of benchmark programs.
First, we experiment with programs that
are specifically designed to cause 
an exponential blowup of the ART when using SBE
(\smalltt{test\_locks*}, in the style of the example in Sect.~\ref{sec-intro}).
Second, we use the device-driver programs that were previously
used as benchmarks in the \blast project.
Third, we solve various verification problems for
the SSH client and server software (\smalltt{s3\_clnt*} and \smalltt{s3\_srvr*}),
which share the same program logic,
but check different safety properties.
The safety property is encoded as conditional calls of
a failure location and therefore reduces to the reachability
of a certain error location.
All benchmarks programs from the \blast web page are preprocessed with \cil. 
For the second and third groups of programs, 
we also performed experiments with artificial defects introduced. 

\smallsec{Experimental Configurations}
For a careful and fair performance comparison, we run experiments
on three different configurations. 
First, we use \blast, version 2.5,
which is a highly optimized state-of-the-art software model checker.
\blast is implemented in the programming language {\sc OCaml}.
We run \blast using all four combinations of
breadth-first search (\smalltt{-bfs}) versus depth-first search (\smalltt{-dfs}),
both with and without heuristics for improving the predicate discovery. 
\blast provides five different levels of heuristics
for predicate discovery, and we use only the lowest (\smalltt{-predH 0})
and the highest option (\smalltt{-predH 7}).
Interestingly, every combination is best for some particular example programs,
with considerable differences in runtime and memory consumption.
The configuration using \smalltt{-dfs -predH 7}
is the winner (in terms of solved problems and total runtime)
for the programs without defects, but is not able 
to verify four example programs (timeout).
In the performance table, we provide results obtained using this configuration (column \smalltt{-dfs -predH 7}),
and also the best result among the four configurations for every single
instance (column \smalltt{best result}).
For the unsafe programs, \smalltt{-bfs -predH~7} performs best.
All four configurations use the command-line options 
\smalltt{-craig 2 -nosimplemem -alias ""},
which specify that \blast runs with lazy, Craig-interpolation-based refinement,
no \cil{} preprocessing for memory access, and without pointer analysis.
In all experiments with \blast, we use the same interpolation procedure (\mathsat)
as in our \cpachecker{}-based implementation.
(The results of all four configurations are provided in Appendix~\ref{sec:appendix_results},
 to the reviewers.)

Second, in order to separate the optimization efforts in \blast from 
the conceptual essence of the traditional lazy abstraction algorithm, 
we developed a re-implementation of the traditional algorithms as 
described in the \blast tool article~\cite{BLAST}.
This re-implementation is integrated as component into \cpachecker,
so that the difference between SBE and LBE is only in the algorithms,
not in the environment
(same parser, same BDD package, same query optimization, etc.).
Our SBE implementation uses a DFS algorithm.
This column is labeled as SBE.

Third, we run the experiments using our new LBE algorithm, 
which is also implemented within \cpachecker.
Our LBE implementation uses a DFS algorithm.
This column is labeled as LBE.
Note that the purpose of our experiments is to give evidence
of the performance difference between SBE and LBE, because these two settings
are closest to each other, 
since SBE and LBE differ only in the 
CFA summarization and Boolean abstraction.
The other two columns are provided to give evidence that
the new approach beats the highly optimized traditional implementation \blast.

We actually configured and ran experiments with all four combinations: SBE versus LBE, 
and Cartesian versus Boolean abstraction.
The experimentation clearly showed that SBE does not benefit from
Boolean abstraction in terms of precision, with substantial degrade in
performance: the only programs for which it terminated successfully
were the first five instances of the \smalltt{test\_locks} group.
Similarly, the combination of LBE with Cartesian abstraction fails to
solve any of the experiments, 
due to loss of precision.
Thus, we report only on the two successful
configurations, i.e., SBE in combination with Cartesian abstraction,
and LBE in combination with Boolean abstraction.

%
\begin{table*}[t]
\caption{Performance results
  \label{tab:eval}
}
\vspace{1mm}
\centering
\begin{smallerfont}
\begin{tabular}{lrr|rr}
\hline
              & \multicolumn{2}{c|}{\bf \blast} & \multicolumn{2}{c}{\bf \cpachecker} \\
{\bf Program} & {\tt\scriptsize (best result)} & {\tt\scriptsize (-dfs -predH 7)} & {\bf SBE} & {\bf LBE}\\
\hline
test\_locks\_5.c & 4.50 & 4.96 & 4.01 & {\bf 0.29} \\
test\_locks\_6.c & 7.81 & 8.81 & 7.22 & {\bf 0.32} \\
test\_locks\_7.c & 13.91 & 15.15 & 12.63 & {\bf 0.34} \\
test\_locks\_8.c & 25.00 & 26.49 & 23.93 & {\bf 0.57} \\
test\_locks\_9.c & 46.84 & 49.29 & 52.04 & {\bf 0.38} \\
test\_locks\_10.c & 94.57 & 97.85 & 131.39 & {\bf 0.40} \\
test\_locks\_11.c & 204.55 & 208.78 & MO & {\bf 0.70} \\
test\_locks\_12.c & 529.16 & 533.97 & MO & {\bf 0.46} \\
test\_locks\_13.c & 1229.27 & 1232.87 & MO & {\bf 0.49} \\
test\_locks\_14.c & $>$1800.00 & $>$1800.00 & MO & {\bf 0.50} \\
test\_locks\_15.c & $>$1800.00 & $>$1800.00 & MO & {\bf 0.56} \\
\hline
cdaudio.i.cil.c & 175.76 & 264.12 & MO & {\bf 53.55} \\
diskperf.i.cil.c & $>$1800.00 & $>$1800.00 & MO & {\bf 232.00} \\
floppy.i.cil.c & 218.26 & $>$1800.00 & MO & {\bf 56.36} \\
kbfiltr.i.cil.c & 23.55 & 32.80 & 41.12 & {\bf 7.82} \\
parport.i.cil.c & 738.82 & 915.79 & MO & {\bf 378.04} \\
\hline
s3\_clnt.blast.01.i.cil.c & 33.01 & 1000.41 & 755.81 & {\bf 19.51} \\
s3\_clnt.blast.02.i.cil.c & 62.65 & 312.77 & 1075.45 & {\bf 16.00} \\
s3\_clnt.blast.03.i.cil.c & 60.62 & 314.74 & 746.31 & {\bf 49.50} \\
s3\_clnt.blast.04.i.cil.c & 63.96 & 197.65 & 730.80 & {\bf 25.45} \\
s3\_srvr.blast.01.i.cil.c & 811.27 & 1036.89 & $>$1800.00 & {\bf 125.33} \\
s3\_srvr.blast.02.i.cil.c & 360.47 & 360.47 & $>$1800.00 & {\bf 122.83} \\
s3\_srvr.blast.03.i.cil.c & 276.19 & 276.19 & $>$1800.00 & {\bf 98.47} \\
s3\_srvr.blast.04.i.cil.c & 175.64 & 301.85 & $>$1800.00 & {\bf 71.77} \\
s3\_srvr.blast.06.i.cil.c & 304.63 & 304.63 & $>$1800.00 & {\bf 59.70} \\
s3\_srvr.blast.07.i.cil.c & 478.05 & 666.53 & $>$1800.00 & {\bf 85.82} \\
s3\_srvr.blast.08.i.cil.c & 115.76 & 115.76 & $>$1800.00 & {\bf 61.29} \\
s3\_srvr.blast.09.i.cil.c & 445.21 & 1037.09 & $>$1800.00 & {\bf 126.47} \\
s3\_srvr.blast.10.i.cil.c & 115.10 & 115.10 & $>$1800.00 & {\bf 63.36} \\
s3\_srvr.blast.11.i.cil.c & 367.98 & 844.28 & $>$1800.00 & {\bf 162.76} \\
s3\_srvr.blast.12.i.cil.c & 304.05 & 304.05 & $>$1800.00 & {\bf 170.33} \\
s3\_srvr.blast.13.i.cil.c & 580.33 & 878.54 & $>$1800.00 & {\bf 74.49} \\
s3\_srvr.blast.14.i.cil.c & 303.21 & 303.21 & $>$1800.00 & {\bf 50.38} \\
s3\_srvr.blast.15.i.cil.c & 115.88 & 115.88 & $>$1800.00 & {\bf 21.01} \\
s3\_srvr.blast.16.i.cil.c & 305.11 & 305.11 & $>$1800.00 & {\bf 127.82} \\
\hline
{\bf TOTAL (solved/time)}\!\! & {\bf ~32\,/\,8591.12} & {\bf ~31\,/\,12182.03} & {\bf ~11\,/\,3580.71} & {\bf ~35\,/\,2265.07}\\
\hline
{\bf TOTAL w/o {\tt test\_locks*}} & {\bf ~23\,/\,6435.51} & {\bf ~22\,/\,10003.06} & {\bf ~5\,/\,3349.48} & {\bf ~24\,/\,2260.07}\\
\hline
\end{tabular}
\end{smallerfont}
\end{table*}

\begin{table*}[t]
\caption{Detailed comparison between SBE and LBE encodings;
 entries marked with (*) denote partial statistics 
 for analyses that terminated unsuccessfully
 (if available).
  \label{tab:eval_explicit_vs_summary_stats}
}
\vspace{-2mm}
\centering
\begin{smallerfont}
\begin{tabular}{l@{\hspace{3mm}}|r@{\hspace{3mm}}r@{\hspace{1mm}}@{\hspace{1mm}}rrr|r@{\hspace{3mm}}r@{\hspace{1mm}}@{\hspace{1mm}}rrr}
\hline
 & \multicolumn{5}{c|}{\bf SBE} & 
   \multicolumn{5}{|c}{\bf LBE} \\
& {\bf ART} & {\bf \# ref} & \multicolumn{3}{c|}{\bf \# predicates}
& {\bf ART} & {\bf \# ref} & \multicolumn{3}{c}{\bf \# predicates} \\
{\bf Program} & {\bf size} & {\bf steps} & {\bf Tot} & {\bf Avg} & {\bf Max} &
                 {\bf size} & {\bf steps} & {\bf Tot} & {\bf Avg} & {\bf Max} \\
\hline
test\_locks\_5.c & 1344 & 50 & 10 & 3 & 10 & 4 & 0 & 0 & 0 & 0 \\
test\_locks\_6.c & 2301 & 72 & 12 & 4 & 12 & 4 & 0 & 0 & 0 & 0 \\
test\_locks\_7.c & 3845 & 98 & 14 & 5 & 14 & 4 & 0 & 0 & 0 & 0 \\
test\_locks\_8.c & 6426 & 128 & 16 & 6 & 16 & 4 & 0 & 0 & 0 & 0 \\
test\_locks\_9.c & 10926 & 162 & 18 & 7 & 18 & 4 & 0 & 0 & 0 & 0 \\
test\_locks\_10.c & 19091 & 200 & 20 & 8 & 20 & 4 & 0 & 0 & 0 & 0 \\
test\_locks\_11.c & 24779(*) & 242(*) & 22(*) & 9(*) & 22(*) & 4 & 0 & 0 & 0 & 0 \\
test\_locks\_12.c & 28119(*) & 288(*) & 24(*) & 10(*) & 24(*) & 4 & 0 & 0 & 0 & 0 \\
test\_locks\_13.c & 31739(*) & 338(*) & 26(*) & 10(*) & 26(*) & 4 & 0 & 0 & 0 & 0 \\
test\_locks\_14.c & 35178(*) & 392(*) & 28(*) & 11(*) & 28(*) & 4 & 0 & 0 & 0 & 0 \\
test\_locks\_15.c & 38777(*) & 450(*) & 30(*) & 12(*) & 30(*) & 4 & 0 & 0 & 0 & 0 \\
\hline
cdaudio.i.cil.c & 53323(*) & 445(*) & 147(*) & 9(*) & 78(*) & 6909 & 140 & 79 & 5 & 16 \\
diskperf.i.cil.c & -- & -- & -- & -- & -- & 4890 & 145 & 56 & 6 & 21 \\
floppy.i.cil.c & 31079(*) & 301(*) & 79(*) & 7(*) & 35(*) & 9668 & 176 & 58 & 4 & 13 \\
kbfiltr.i.cil.c & 19640 & 153 & 53 & 5 & 27 & 1577 & 47 & 18 & 2 & 6 \\
parport.i.cil.c & 26188(*) & 360(*) & 143(*) & 4(*) & 41(*) & 38488 & 474 & 168 & 4 & 17 \\
\hline
s3\_clnt.blast.01.i.cil.c & 122678 & 557 & 59 & 20 & 59 & 36 & 5 & 47 & 11 & 47 \\
s3\_clnt.blast.02.i.cil.c & 354132 & 532 & 55 & 19 & 55 & 36 & 5 & 51 & 12 & 51 \\
s3\_clnt.blast.03.i.cil.c & 196599 & 534 & 55 & 19 & 55 & 39 & 5 & 75 & 18 & 75 \\
s3\_clnt.blast.04.i.cil.c & 172444 & 538 & 55 & 19 & 55 & 36 & 5 & 47 & 11 & 47 \\
s3\_srvr.blast.01.i.cil.c & 232195(*) & 774(*) & 70(*) & 20(*) & 70(*) & 101 & 6 & 88 & 22 & 88 \\
s3\_srvr.blast.02.i.cil.c & 254667(*) & 745(*) & 79(*) & 19(*) & 78(*) & 109 & 7 & 75 & 18 & 75 \\
s3\_srvr.blast.03.i.cil.c & -- & -- & -- & -- & -- & 91 & 6 & 85 & 21 & 85 \\
s3\_srvr.blast.04.i.cil.c & -- & -- & -- & -- & -- & 103 & 7 & 82 & 20 & 82 \\
s3\_srvr.blast.06.i.cil.c & 295698(*) & 576(*) & 63(*) & 14(*) & 63(*) & 94 & 6 & 84 & 21 & 84 \\
s3\_srvr.blast.07.i.cil.c & -- & -- & -- & -- & -- & 92 & 5 & 85 & 21 & 85 \\
s3\_srvr.blast.08.i.cil.c & 279991(*) & 549(*) & 57(*) & 15(*) & 57(*) & 89 & 5 & 88 & 22 & 88 \\
s3\_srvr.blast.09.i.cil.c & 189541(*) & 720(*) & 72(*) & 16(*) & 71(*) & 193 & 4 & 72 & 18 & 72 \\
s3\_srvr.blast.10.i.cil.c & 307671(*) & 597(*) & 55(*) & 16(*) & 55(*) & 91 & 5 & 79 & 19 & 79 \\
s3\_srvr.blast.11.i.cil.c & -- & -- & -- & -- & -- & 48 & 6 & 69 & 17 & 69 \\
s3\_srvr.blast.12.i.cil.c & 258546(*) & 563(*) & 57(*) & 15(*) & 57(*) & 99 & 6 & 94 & 23 & 94 \\
s3\_srvr.blast.13.i.cil.c & 167333(*) & 682(*) & 70(*) & 18(*) & 69(*) & 90 & 5 & 81 & 20 & 81 \\
s3\_srvr.blast.14.i.cil.c & 318982(*) & 643(*) & 65(*) & 13(*) & 64(*) & 92 & 6 & 83 & 20 & 83 \\
s3\_srvr.blast.15.i.cil.c & 279319(*) & 579(*) & 58(*) & 15(*) & 58(*) & 71 & 4 & 71 & 17 & 71 \\
s3\_srvr.blast.16.i.cil.c & 346185(*) & 596(*) & 59(*) & 12(*) & 58(*) & 98 & 6 & 86 & 21 & 86 \\
\hline
\end{tabular}
\end{smallerfont}
\end{table*}

\begin{table}
\caption{Performance results, 
  programs with artificial bugs.
  \label{tab:eval_bug}
}
\vspace{1mm}
\centering
\begin{smallerfont}
\begin{tabular}{lr|rr}
\hline
              & {\bf \blast} & \multicolumn{2}{c}{\bf \cpachecker} \\
\multicolumn{2}{l|}{{\bf Program} \hfill {\tt\scriptsize (best result)}} & {\bf SBE} & {\bf LBE}\\
\hline
cdaudio.BUG.i.cil.c & 18.79 & 74.39 & {\bf 9.85} \\
diskperf.BUG.i.cil.c & 889.79 & 26.53 & {\bf 6.78} \\
floppy.BUG.i.cil.c & 119.60 & 36.49 & {\bf 4.30} \\
kbfiltr.BUG.i.cil.c & 46.80 & 75.45 & {\bf 11.52} \\
parport.BUG.i.cil.c & {\bf 1.67} & 14.62 & 2.64 \\
\hline
s3\_clnt.blast.01.BUG.i.cil.c & 8.84 & 1514.90 & {\bf 3.33} \\
s3\_clnt.blast.02.BUG.i.cil.c & 9.02 & 843.42 & {\bf 3.27} \\
s3\_clnt.blast.03.BUG.i.cil.c & 6.64 & 780.72 & {\bf 2.61} \\
s3\_clnt.blast.04.BUG.i.cil.c & 9.78 & 724.04 & {\bf 3.18} \\
s3\_srvr.blast.01.BUG.i.cil.c & 7.59 & MO & {\bf 2.09} \\
s3\_srvr.blast.02.BUG.i.cil.c & 7.16 & $>$1800.00 & {\bf 2.10} \\
s3\_srvr.blast.03.BUG.i.cil.c & 7.42 & $>$1800.00 & {\bf 2.08} \\
s3\_srvr.blast.04.BUG.i.cil.c & 7.33 & $>$1800.00 & {\bf 1.93} \\
s3\_srvr.blast.06.BUG.i.cil.c & 39.81 & MO & {\bf 5.08} \\
s3\_srvr.blast.07.BUG.i.cil.c & 310.84 & $>$1800.00 & {\bf 28.35} \\
s3\_srvr.blast.08.BUG.i.cil.c & 40.51 & $>$1800.00 & {\bf 36.47} \\
s3\_srvr.blast.09.BUG.i.cil.c & 265.48 & $>$1800.00 & {\bf 4.94} \\
s3\_srvr.blast.10.BUG.i.cil.c & 40.24 & $>$1800.00 & {\bf 12.01} \\
s3\_srvr.blast.11.BUG.i.cil.c & 49.05 & $>$1800.00 & {\bf 4.80} \\
s3\_srvr.blast.12.BUG.i.cil.c & 38.66 & $>$1800.00 & {\bf 6.11} \\
s3\_srvr.blast.13.BUG.i.cil.c & 251.56 & $>$1800.00 & {\bf 15.20} \\
s3\_srvr.blast.14.BUG.i.cil.c & 39.94 & 1656.54 & {\bf 4.63} \\
s3\_srvr.blast.15.BUG.i.cil.c & 40.19 & $>$1800.00 & {\bf 10.19} \\
s3\_srvr.blast.16.BUG.i.cil.c & 39.54 & $>$1800.00 & {\bf 5.21} \\
\hline
{\bf TOTAL (solved/time)} & {\bf 24\,/\,2296.25} & {\bf 10\,/\,5747.10} & {\bf 24\,/\,188.67}\\
\hline
\end{tabular}
\end{smallerfont}
\end{table}

\smallsec{Discussion of Evaluation Results}
Tables~\ref{tab:eval} and~\ref{tab:eval_bug} present performance results of our experiments,
for the safe and unsafe programs respectively.
All runtimes are given in seconds of processor time, `$>$1800.00'
indicates a timeout, `MO' indicates an out-of-memory.
Table~\ref{tab:eval_explicit_vs_summary_stats} 
shows statistics about the algorithms for SBE and LBE only.

The first group of experiments in Table~\ref{tab:eval} 
shows that the time complexity of SBE (and \blast)
can grow exponentially in the number of nested conditional
statements, as expected.
Table~\ref{tab:eval_explicit_vs_summary_stats} explains why
the SBE approach exhausts the memory:
the number of abstract nodes in the reachability tree grows exponentially
in the number of nested conditional statements.
Therefore, this approach does not scale.
The LBE approach reduces the loop-free part of the branching control-flow structure
to a few edges (cf.~example in the introduction),
and the size of the ART is constant, because only the
structure inside the body of the loop changes.
There are no refinement steps necessary in the LBE approach,
because the edges to the error location are infeasible.
Therefore, no predicates are used.
The runtime of the LBE approach slightly increases
with the size of the program, because the formulas
that are sent to the SMT solver are slightly increasing.
Although in principle the complexity of the SMT problem grows exponentially in the size of the formulas,
the heuristics used by SMT solvers avoid the exponential enumeration
that we observe in the case of SBE.

For the two other classes of experiments, we see that LBE
is able to successfully complete all benchmarks,
and shows significant performance gains over SBE.
SBE is able to solve only about
one third of all benchmarks, and for the ones that complete,
it is clearly outperformed by LBE.
In Table~\ref{tab:eval_explicit_vs_summary_stats}, we see that SBE
has in general a much larger ART.
In Table~\ref{tab:eval} we observe not only that LBE performs 
significantly better than the \smalltt{-dfs -predH 7} configuration of \blast, 
but that LBE is better than any \blast configuration (column \smalltt{best result}).
LBE performed best also in finding the error paths (cf. Table~\ref{tab:eval_bug}), 
clearly outperforming both SBE and \blast.

In summary, the experiments show that the LBE approach outperforms the
SBE approach, both for correct and defective programs.
This provides evidence of the benefits of a ``more symbolic''
analysis as performed in the LBE approach.
One might argue that our \cpachecker-based SBE implementation might be sub-optimal
although it uses the same implementation and execution environment as LBE;
this is why we compare with \blast as well,
and the experiments become even more impressive when considering that \blast is the result of
several years of fine-tuning.

\section{Conclusion and Future Work}
\label{sec-conclusion}

We have proposed LBE as an alternative to the SBE
model-checking approach, based on the idea that
transitions in the abstract space represent larger fragments of the
program. 
Our novel approach results in significantly smaller ARTs, 
where abstract successor computations are more involved,
and thus trading cost of many explicit enumerations of program paths  
with the cost of symbolic successor computations.
A thorough
experimental evaluation shows the advantages of LBE against both our
implementation of SBE and the state-of-the-art \blast{} system.

In our future work, we plan to
implement McMillan's interpolation-based lazy-abstraction approach~\cite{McMillanCAV06},
and experiment with SBE versus LBE versions of his algorithm.
Furthermore, we plan to investigate the use of adjustable precision-based 
techniques for the construction of the large blocks 
on-the-fly (instead of the current preprocessing step).
This would enable a dynamic adjustment of the size of the large blocks,
and thus we could fine-tune the amount of work that is delegated to the SMT solver.
Also, we plan to explore other techniques for computing abstract successors 
which are more precise than Cartesian abstraction but less expensive than Boolean abstraction.

\smallsec{Acknowledgments}
We thank Roman Manevich for interesting discussions about \blast's performance bottlenecks.

\appendix
\section{Appendix}
\label{sec:appendix}

\subsection{Proof of Theorem~\ref{thm:summary_correctness}}
\label{sec:appendix_proof}
In order to prove Theorem~\ref{thm:summary_correctness},
we introduce some auxiliary lemmas.

\begin{lemma}\label{thm:lemma_sp_disjunction}
  Let $(l, \op, l')$ be a CFA edge,
  and $\{\varphi_i\}_i$ a collection of formulas.
  Then 
  \begin{displaymath}
    \SP_{op}(\textstyle\bigvee_i \varphi_i) \equiv \textstyle\bigvee_i \SP_{op}(\varphi_i).
  \end{displaymath}
\end{lemma}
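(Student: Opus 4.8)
The plan is to read the result off directly from the definition of $\SP_{op}$, exploiting the fact that this operator is assembled entirely from constructions that commute with disjunction. I would give the argument in two complementary ways and keep the semantic one as primary, since it also covers arbitrary index sets $i$.

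Semantically, every operation $op$ induces a one-step successor relation $R_{op}$ on concrete data states: for $\mathit{assume}(p)$ it is the partial identity $\{(c,c) \mid c \models p\}$, and for an assignment $s := e$ it maps $c$ to the state that agrees with $c$ except that the value of $s$ becomes the value of $e$ evaluated in $c$. By the very definition of the strongest postcondition, $\SP_{op}(\phi)$ represents the relational image $R_{op}[\sem{\phi}]$. Now $\sem{\bigvee_i \varphi_i} = \bigcup_i \sem{\varphi_i}$, and a relational image commutes with arbitrary unions, so $R_{op}\bigl[\bigcup_i \sem{\varphi_i}\bigr] = \bigcup_i R_{op}[\sem{\varphi_i}]$, which is exactly the set represented by $\bigvee_i \SP_{op}(\varphi_i)$. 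Hence the two formulas are equivalent.

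For a purely syntactic derivation (for a finite collection) I would split on the form of $op$, following the definition. If $op = \mathit{assume}(p)$, then $\SP_{op}\bigl(\bigvee_i \varphi_i\bigr) = \bigl(\bigvee_i \varphi_i\bigr) \land p$, which is equivalent to $\bigvee_i (\varphi_i \land p) = \bigvee_i \SP_{op}(\varphi_i)$ by distributivity of $\land$ over $\lor$. If $op$ is $s := e$, then since renaming $s$ to a fresh $\widehat{s}$ is a Boolean homomorphism we have $\bigl(\bigvee_i \varphi_i\bigr)_{[s \mapsto \widehat{s}]} = \bigvee_i (\varphi_i)_{[s \mapsto \widehat{s}]}$, and pushing $\exists \widehat{s}$ inside the disjunction (valid because $\exists$ distributes over $\lor$) turns $\SP_{op}\bigl(\bigvee_i \varphi_i\bigr)$ into $\bigvee_i \exists \widehat{s}\,\bigl((\varphi_i)_{[s \mapsto \widehat{s}]} \land (s = e_{[s \mapsto \widehat{s}]})\bigr) = \bigvee_i \SP_{op}(\varphi_i)$.

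I do not expect any real obstacle here: the whole content is that $\SP_{op}$ is built from $\land$, $\exists$, and syntactic substitution (equivalently, from a relational image), each of which commutes with disjunction. The only place that needs a moment of attention is the assignment case, where one must handle the renaming substitution and the existential quantifier in the right order; the semantic reformulation sidesteps even that and makes transparent that the identity holds for an index set of any cardinality.
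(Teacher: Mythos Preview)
Your syntactic case analysis matches the paper's approach for the two base cases, but you have overlooked that in this paper a CFA edge may carry a \emph{composite} operation $\op_1\sequence\op_2$ or $\op_1\parallel\op_2$, introduced by the summarization rules. The lemma is invoked (in the proof of Lemma~\ref{thm:lemma_summary_compl}) on edges of intermediate CFAs that have already undergone rule applications, so these composite forms genuinely occur. Accordingly, the paper's proof has two additional cases, handled by structural induction on~$\op$: for $\op_1\sequence\op_2$ one applies the inductive hypothesis twice, and for $\op_1\parallel\op_2$ one uses it on each branch and regroups.

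Your semantic argument would cover these cases too, but as written it does not: you define $R_\op$ only for assume and assignment, and then assert that ``by the very definition'' $\SP_\op(\varphi)$ is the relational image $R_\op[\sem{\varphi}]$. For composite operations $\SP$ is defined recursively, not directly as an image, so you still need a structural induction on~$\op$---taking $R_{\op_1\sequence\op_2}$ to be relational composition and $R_{\op_1\parallel\op_2}$ to be the union of the component relations---to establish that $\SP_\op$ really coincides with the image under $R_\op$. Once that step is supplied your argument goes through and is indeed mildly more general (arbitrary index sets), but the missing induction is precisely the work the paper's proof carries out directly.
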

\begin{proof}
  \newcommand{\bvee}{\textstyle\bigvee}
  If $\op$ is an assignment operation $s := e$, then
  \begin{displaymath}
    \newcommand{\sh}[1]{[#1 \mapsto \widehat{#1}]}
    \begin{split}
      \SP_{s := e}(\bvee_i \varphi_i) 
      & ~=~ \exists\,\widehat{s}.
      ((\bvee_i \varphi_i)_{\sh{s}} \land (s = e_{\sh{s}})) \\
      & ~\equiv~ \exists\,\widehat{s}.
      (\bvee_i ({\varphi_i}_{\sh{s}} \land (s = e_{\sh{s}}))) \\
      & ~\equiv~ \bvee_i(\exists\,\widehat{s}.
      ({\varphi_i}_{\sh{s}} \land (s = e_{\sh{s}}))) \\
      & ~\equiv~ \bvee_i \SP_{s := e}(\varphi_i)
    \end{split}
  \end{displaymath}

  \smallskip \noindent
  If $\op$ is an assume operation $\mathit{assume}(p)$, then
  \begin{displaymath}
    \begin{split}
      \SP_{\mathit{assume}(p)}(\bvee_i \varphi_i) 
      & ~=~ (\bvee_i \varphi_i) \land p \\
      & ~\equiv~ \bvee_i (\varphi_i \land p) \\
      & ~\equiv~ \bvee_i \SP_{\mathit{assume}(p)}(\varphi_i)
    \end{split}
  \end{displaymath}

  \smallskip \noindent
  The remaining two cases can be proven by induction.\\
  If $\op = \op_1 \sequence \op_2$, then
  \begin{displaymath}
    \begin{split}
      \SP_{\op_1\sequence \op_2}(\bvee_i \phi_i)
      & ~=~ \SP_{\op_2}(\SP_{\op_1}(\bvee_i \phi_i)) \\
      & ~\equiv~ \SP_{\op_2}(\bvee_i \SP_{\op_1}(\phi_i)) \\
      & ~\equiv~ \bvee_i \SP_{\op_2}(\SP_{\op_1}(\phi_i)) \\
      & ~\equiv~ \bvee_i \SP_{\op_1\sequence \op_2}(\phi_i)
    \end{split}
  \end{displaymath}

  \smallskip \noindent
  If $\op = \op_1 \parallel \op_2$, then
  \begin{displaymath}
    \begin{split}
      \SP_{\op_1\parallel \op_2}(\bvee_i \phi_i)
      & ~=~ \SP_{\op_1}(\bvee_i \phi_i) \lor \SP_{\op_2}(\bvee_i \phi_i) \\
      & ~\equiv~ (\bvee_i \SP_{\op_1}(\phi_i)) \lor (\bvee_i \SP_{\op_2}(\phi_i)) \\
      & ~\equiv~ \bvee_i (\SP_{\op_1}(\phi_i) \lor \SP_{\op_2}(\phi_i)) \\
      & ~\equiv~ \bvee_i \SP_{\op_1\parallel \op_2}(\phi_i)
    \end{split}
  \end{displaymath}
\qed
\end{proof}

\begin{lemma}\label{thm:lemma_summary_sound}
  Let $A = (L, G)$ be a CFA, 
  and let $A' = (L', G')$ be a summarization of $A$.
  Let $\path$ be a path in $A$ such that 
  its initial and final locations occur also in $L'$.
  Then for all $\varphi$,
  there exists a path $\path'$ in $A'$,
  with the same initial and final locations as $\path$,
  such that $\SP_\path(\varphi) \models \SP_{\path'}(\varphi)$.
\end{lemma}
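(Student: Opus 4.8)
The natural route is induction on the number $n$ of applications of Rules~1 and~2 needed to rewrite $A$ into $A'$. (Rule~0 only deletes the outgoing edges of $\pct$ and is accounted for separately in the proof of the theorem, so here I take $A$ to be the CFA on which Rules~1 and~2 act.) For $n = 0$ we have $A' = A$ and may take $\path' = \path$. For the inductive step write $A \to A_1$ for the first rewriting, so that $A_1$ is summarized into $A'$ by $n-1$ further applications. Since Rules~1 and~2 never introduce new locations, $L' \subseteq L_1 \subseteq L$; hence the initial and final locations of $\path$, which by hypothesis lie in $L'$, lie in $L_1$ as well. It therefore suffices to produce a single path $\path_1$ in $A_1$ with the same initial and final locations as $\path$ and $\SP_\path(\varphi) \models \SP_{\path_1}(\varphi)$: the induction hypothesis applied to $A_1$, $A'$ and $\path_1$ then yields a path $\path'$ in $A'$ with those same endpoints and $\SP_{\path_1}(\varphi) \models \SP_{\path'}(\varphi)$, and transitivity of $\models$ closes the step.

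Two observations drive the construction. First, every operation occurring in a CFA --- including the compound operations $\op_1 \sequence \op_2$ and $\op_1 \parallel \op_2$ created by the rules --- has a monotone strongest postoperator: if $\psi \models \psi'$ then $\psi' \equiv \psi \lor \psi'$, so by Lemma~\ref{thm:lemma_sp_disjunction} $\SP_\op(\psi') \equiv \SP_\op(\psi) \lor \SP_\op(\psi')$, whence $\SP_\op(\psi) \models \SP_\op(\psi')$. Second, $\SP$ along a path is obtained by applying the per-step postoperators in order, so a step-by-step entailment between two paths of equal length propagates, by monotonicity, to an entailment between their overall strongest postconditions.

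It remains to build $\path_1$ for the first rewriting. If $A \to A_1$ is an application of \textbf{Rule~1} eliminating $l_2$ through its unique incoming edge $(l_1, \op_1, l_2)$, note that $l_2 \notin L'$, so $l_2$ is neither the initial nor the final location of $\path$; because $(l_1, \op_1, l_2)$ is the only edge into $l_2$, every occurrence of $l_2$ in $\path$ is immediately preceded by the step from $l_1$ labeled $\op_1$ and, $l_2$ not being final, immediately followed by a step labeled $\op_i$ for some edge $(l_2, \op_i, l_i)$ of $A$. Replacing each such pair of consecutive steps by the single step labeled $\op_1 \sequence \op_i$ ending in $l_i$ --- which is exactly the edge $(l_1, \op_1 \sequence \op_i, l_i)$ of $A_1$, possibly a self-loop if $l_i = l_1$ --- gives a valid path $\path_1$ in $A_1$ with the same endpoints, and since $\SP_{\op_1 \sequence \op_i}(\psi) = \SP_{\op_i}(\SP_{\op_1}(\psi))$ by definition we even get $\SP_\path(\varphi) \equiv \SP_{\path_1}(\varphi)$. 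If $A \to A_1$ is an application of \textbf{Rule~2} replacing the parallel edges $(l_1, \op_1, l_2)$ and $(l_1, \op_2, l_2)$ by $(l_1, \op_1 \parallel \op_2, l_2)$, replace in $\path$ every step traversing $(l_1, \op_1, l_2)$ or $(l_1, \op_2, l_2)$ by the step traversing $(l_1, \op_1 \parallel \op_2, l_2)$; this produces a path $\path_1$ in $A_1$ of the same length with the same endpoints. At each replaced step the running strongest postcondition $\psi$ along $\path$ satisfies $\SP_{\op_j}(\psi) \models \SP_{\op_1}(\psi) \lor \SP_{\op_2}(\psi) = \SP_{\op_1 \parallel \op_2}(\psi)$, and at every other step the same operator is applied on both sides, so the propagation observation above yields $\SP_\path(\varphi) \models \SP_{\path_1}(\varphi)$.

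The main obstacle is not a hard computation but the careful path bookkeeping: one must verify that the locations deleted by Rules~1 and~2 can never be endpoints of $\path$ (this is where the hypothesis that the endpoints survive in $L'$ is used), that paths which revisit the rewritten region several times --- for instance around a loop, using the backward edges $(l_2, \cdot, l_1)$ explicitly permitted by the rules --- are handled by rewriting each occurrence independently, and that the rewritten step sequence is again a syntactically legal path in $A_1$, in particular once Rule~1 has created self-loops. Granting these routine checks, the entailments reduce to the defining identities for $\SP_{\op_1 \sequence \op_2}$ and $\SP_{\op_1 \parallel \op_2}$ together with the monotonicity of $\SP$ furnished by Lemma~\ref{thm:lemma_sp_disjunction}.
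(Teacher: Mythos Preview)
Your proof is correct and follows essentially the same strategy as the paper: both reduce to a single rule application by an outer induction on the number of rewriting steps, and then handle Rules~1 and~2 by separate case analysis. The only presentational difference is that the paper's inner argument for one rule application proceeds by induction on the length of~$\path$ (peeling off the last edge), whereas you rewrite all affected edges of~$\path$ globally and appeal to monotonicity of~$\SP$---which you derive explicitly from Lemma~\ref{thm:lemma_sp_disjunction}, while the paper uses it tacitly.
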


\begin{proof}
  CFA $A'$ is obtained from $A$ by a sequence of $n$ rule applications.
  If $n=0$ we have $A' = A$.
  If the lemma holds for one rule application, we can show by induction
  that the lemma holds for any finite sequence of rule applications.

  We now show that the lemma holds for one rule application.
  Let \mbox{$\path = \path_1,(l_i, \op_i, l_j)$}. 
  The proof is by induction on the length of $\path$.
  (The base case is when $\path_1$ is empty.)

  If $l_i \in L'$, by the inductive hypothesis 
  there exists a path $\path'_1$ in $A'$ such that 
  $\SP_{\path_1}(\varphi) \models \SP_{\path'_1}(\varphi)$.
  If $(l_i, \op_i, l_j) \in G'$, then we can take 
  \mbox{$\path' = \path'_1, (l_i, \op_i, l_j)$}.
  Otherwise, $(l_i, \op_i, l_j)$ must have been removed 
  by an application of Rule 2,
  \footnote{It could not have been removed by Rule 1, 
    because when Rule 1 removes the edges 
    $(\cdot, \cdot, l)$ and $(l, \cdot, \cdot)$, 
    it removes also the location $l$.}
  and so $G'$ contains an edge $(l_i, \op_i \parallel \cdot, l_j)$. 
  Therefore, we can take 
  \mbox{$\path' = \path'_1, (l_i, \op_i \parallel \cdot, l_j)$}.
  
  If $l_i \not\in L'$, then by hypothesis
  $\path \equiv \path_2, (l_k, \op_k, l_i), (l_i, \op_i, l_j)$.
  Moreover, $l_i$ has been removed by an application of Rule 1.
  By the definition of Rule 1, $(l_k, \op_k, l_i)$ is the only incoming edge for $l_i$ in $G$.
  Therefore, $G'$ contains an edge
  $(l_k, \op_k \sequence \op_i, l_j)$ and clearly $l_k \in L'$.
  Thus, by the inductive hypothesis 
  there exists a path $\path'_2$ in $A'$ such that 
  $\SP_{\path_2}(\varphi) \models \SP_{\path'_2}(\varphi)$,
  and so we can take \mbox{$\path' = \path'_2, (l_k, \op_k \sequence \op_i, l_j)$}.
  \qed
\end{proof}

\begin{lemma}\label{thm:lemma_summary_compl}
  Let $A = (L, G)$ be a CFA, 
  and let $A' = (L', G')$ be a summarization of $A$.
  Let $\path'$ be a path in $A'$.
  Then for all $\varphi$, 
  there exists a set $\Sigma$ of paths in $A$,
  with the same initial and final locations as $\path'$,
  such that 
  $\SP_{\path'}(\varphi) \equiv \bigvee_{\path \in \Sigma} \SP_\path(\varphi)$.
\end{lemma}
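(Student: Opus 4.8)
The plan is to follow the same structure as the proof of Lemma~\ref{thm:lemma_summary_sound}: reduce the statement to the case of a \emph{single} rule application, and then chain. Since $A'$ is obtained from $A$ by a finite sequence of rule applications, it suffices to establish (a)~a \emph{single-step claim}: if $A''$ is obtained from a CFA $B$ by one application of Rule~0, Rule~1, or Rule~2, then for every path $\path''$ in $A''$ and every $\varphi$ there is a \emph{finite} set $\Sigma$ of paths in $B$, all with the same initial and final locations as $\path''$, such that $\SP_{\path''}(\varphi) \equiv \bigvee_{\path\in\Sigma}\SP_\path(\varphi)$; and (b)~the \emph{composition step}, an induction on the number $n$ of rule applications producing $A'$ from $A$. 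For (b), apply the single-step claim to $\path'$ (viewed as a path in the last CFA of the chain) to get a finite set of paths in the penultimate CFA, then apply it again to each of those, and so on down to $A$; the final $\Sigma$ is the union of all the finite sets so produced, and $\SP_{\path'}(\varphi)$ stays equivalent to the corresponding disjunction because a disjunction of disjunctions over finitely many finite index sets is a disjunction over their (finite) union.

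For the single-step claim, Rule~0 is immediate: it only deletes edges, so $\path''$ is itself a path of $B$ and $\Sigma=\{\path''\}$ works. Rule~1 is almost as easy: every edge of $\path''$ is either an edge of $B$ or one of the newly added edges $(l_1,\op_1\sequence\op_i,l_i)$; replacing each occurrence of such an edge by the two-edge segment $(l_1,\op_1,l_2),(l_2,\op_i,l_i)$ of $B$ produces a single valid path $\path$ of $B$ with the same endpoints as $\path''$, and since $\SP_{\op_1\sequence\op_i}=\SP_{\op_i}\circ\SP_{\op_1}$ by definition, $\SP_{\path''}(\varphi)\equiv\SP_\path(\varphi)$; take $\Sigma=\{\path\}$. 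The genuinely interesting case is Rule~2. Here the only new edge is $(l_1,\op_1\parallel\op_2,l_2)$, which may occur \emph{several} times in $\path''$ (a backward edge into $l_1$ can place it on a cycle); let $S$ be the set of positions at which it occurs. Each occurrence can be ``resolved'' either to $(l_1,\op_1,l_2)$ or to $(l_1,\op_2,l_2)$, both edges of $B$, and I would take $\Sigma=\{\path_f : f\colon S\to\{1,2\}\}$, the $2^{|S|}$ paths obtained by all combinations of resolutions; each $\path_f$ has the endpoints of $\path''$ because the replacement does not change edge endpoints.

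To prove $\SP_{\path''}(\varphi)\equiv\bigvee_{f}\SP_{\path_f}(\varphi)$ for the Rule~2 case, I would process $\path''$ edge by edge from the first to the last, maintaining the invariant that the intermediate strongest postcondition after the first $j$ edges is equivalent to $\bigvee_{g}\SP_{\rho_g}(\varphi)$, where $g$ ranges over the resolutions of the occurrences of the new edge among the first $j$ edges and $\rho_g$ is the accordingly resolved length-$j$ prefix. The inductive step at position $j+1$ applies the $\SP$ of the $(j+1)$-st edge to the current disjunction: if that edge is an ordinary edge of $B$ we just pull $\SP$ through the disjunction using Lemma~\ref{thm:lemma_sp_disjunction}; if it is the new edge we additionally use $\SP_{\op_1\parallel\op_2}(\psi)=\SP_{\op_1}(\psi)\lor\SP_{\op_2}(\psi)$, which doubles the index set exactly in the way that extends each $g$ to the two resolutions of the new occurrence. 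After the last edge the invariant gives precisely $\bigvee_{f\colon S\to\{1,2\}}\SP_{\path_f}(\varphi)$.

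The main obstacle I expect is bookkeeping, not mathematics: correctly accounting for \emph{multiple} occurrences of a summarized edge inside a single path — which turns a ``left/right'' alternative into an exponential family of resolved paths — and keeping the disjunction index sets straight through the left-to-right induction, while invoking Lemma~\ref{thm:lemma_sp_disjunction} only on the finite disjunctions that actually arise (so the infinitary issues never come up). A minor point that must still be checked at every expansion is that splicing a resolved/expanded segment back into $\path''$ yields a syntactically valid path of $B$, i.e.\ that consecutive edges share a location; this is immediate from the shape of the edges added by Rules~1 and~2.
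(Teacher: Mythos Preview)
Your proposal is correct and follows essentially the same route as the paper: reduce to a single rule application, then argue by induction along the path, invoking Lemma~\ref{thm:lemma_sp_disjunction} to push $\SP$ through the growing disjunction. The paper organizes the single-step case as one unified induction on the length of $\path'$ (peeling off the last edge and case-splitting on whether it was introduced by Rule~1 or Rule~2), whereas you case-split on the rule first and only need the left-to-right induction for Rule~2; this is a purely organizational difference.
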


\begin{proof}
  CFA $A'$ is obtained from $A$ by a sequence of $n$ rule applications.
  If $n=0$ we have $A' = A$.
  If the lemma holds for one rule application, we can show by induction
  that the lemma holds for any finite sequence of rule applications.

  We now show that the lemma holds for one rule application.
  Let \mbox{$\path' = \path'_p,(l_i, \op_i, l_j)$} be a path in $A'$. 
  The proof is by induction on the length of $\path'$.
  (The base case is when $\path'_p$ is empty.)

  First, we observe that all locations in $\path'$ occur also in $G$.

  By the inductive hypothesis,
  there exists a set $\Sigma_p$ of paths in $A$,
  with the same initial and final locations as $\path'_p$,
  such that 
  $\SP_{\path'_p}(\varphi) \equiv \bigvee_{\path_p \in \Sigma_p} \SP_{\path_p}(\varphi)$.

  If $(l_i, \op_i, l_j) \in G$, then we can take 
  $\Sigma = \{ \path_p, (l_i, \op_i, l_j)~|~\path_p \in \Sigma_p \}$ 
  (by Lemma~\ref{thm:lemma_sp_disjunction}).

  Otherwise, $(l_i, \op_i, l_j)$ was generated 
  by an application of one of the Rules.
  If it was generated by Rule 1, 
  then $G$ contains two edges 
  $(l_i, \op'_i, l_k)$ and $(l_k, \op_k, l_j)$ 
  such that $\op_i = \op'_i \sequence \op_k$.
  Then we can take 
  \mbox{$\Sigma = \{ \path_p, (l_i, \op'_i, l_k), (l_k, \op_k, l_j)~|~\path_p \in \Sigma_p\}$}
  (by Lemma~\ref{thm:lemma_sp_disjunction}).
  If $(l_i, \op_i, l_j)$ was generated by Rule 2,
  then $G$ contains two edges
  $(l_i, \op'_i, l_j)$ and $(l_i, \op''_i, l_j)$
  such that $\op_i = \op'_i \parallel \op''_i$.
  Let \mbox{$\Sigma_1 = \{ \path_p, (l_i, \op'_i, l_j) ~|~ \path_p \in \Sigma_p \}$} 
  and
  \mbox{$\Sigma_2 = \{ \path_p, (l_i, \op''_i, l_j) ~|~ \path_p \in \Sigma_p \}$}.
  Then we can take \mbox{$\Sigma = \Sigma_1 \cup \Sigma_2$} 
  (by Lemma~\ref{thm:lemma_sp_disjunction}).
  
  \qed
\end{proof}

\begin{proof}
Now we prove Theorem~\ref{thm:summary_correctness}.

\begin{itemize}
\item[(i)]
  The only Rule that removes locations is Rule 1. 
  Since $\pci$ has no incoming edges (by definition) 
  and $\pct$ has no outgoing edges (because of Rule 0), 
  they cannot be removed by Rule 1.

\item[(ii)]
  ``$\rightarrow$'' 
  Follows from Lemma~\ref{thm:lemma_summary_sound} 
  and (i).\\
  ``$\leftarrow$''
  Follows from Lemma~\ref{thm:lemma_summary_compl} and (i).
\end{itemize}
\qed
\end{proof}

\subsection{Comparison among different \blast configurations}
\label{sec:appendix_results}

\begin{table*}
\caption{Comparison among different configurations of \blast.
(NP indicates 'no new predicates found during refinement'.)
  \label{tab:eval_blast}
}
\centering
\begin{smallerfont}
  \begin{tabular}{l@{\hspace{-4mm}}rrrrr}
    \hline
                   & {\bf \blast 1} & {\bf \blast 2} & {\bf \blast 3} & {\bf \blast 4} & {\bf \blast B}\\
    {\bf Program}  & {\tt\scriptsize (-bfs -predH 0)} & {\tt\scriptsize (-bfs -predH 7)} 
                   & {\tt\scriptsize (-dfs -predH 0)} & {\tt\scriptsize (-dfs -predH 7)} & {\tt\scriptsize (best result)}\\
\hline
test\_locks\_5.c & 8.36 & 8.40 & {\bf 4.50} & 4.96 & {\bf 4.50} \\
test\_locks\_6.c & 17.63 & 17.29 & {\bf 7.81} & 8.81 & {\bf 7.81} \\
test\_locks\_7.c & 39.90 & 37.83 & {\bf 13.91} & 15.15 & {\bf 13.91} \\
test\_locks\_8.c & 86.98 & 86.69 & {\bf 25.00} & 26.49 & {\bf 25.00} \\
test\_locks\_9.c & 173.63 & 189.96 & {\bf 46.84} & 49.29 & {\bf 46.84} \\
test\_locks\_10.c & 500.30 & 483.07 & {\bf 94.57} & 97.85 & {\bf 94.57} \\
test\_locks\_11.c & 1645.90 & 1534.20 & {\bf 204.55} & 208.78 & {\bf 204.55} \\
test\_locks\_12.c & $>$1800.00 & $>$1800.00 & {\bf 529.16} & 533.97 & {\bf 529.16} \\
test\_locks\_13.c & $>$1800.00 & $>$1800.00 & {\bf 1229.27} & 1232.87 & {\bf 1229.27} \\
test\_locks\_14.c & $>$1800.00 & $>$1800.00 & $>$1800.00 & $>$1800.00 & $>$1800.00 \\
test\_locks\_15.c & $>$1800.00 & $>$1800.00 & $>$1800.00 & $>$1800.00 & $>$1800.00 \\
cdaudio.i.cil.c & 380.83 & 475.67 & {\bf 175.76} & 264.12 & {\bf 175.76} \\
diskperf.i.cil.c & -- & $>$1800.00 & NP & $>$1800.00 & $>$1800.00 \\
floppy.i.cil.c & {\bf 218.26} & $>$1800.00 & NP & $>$1800.00 & {\bf 218.26} \\
kbfiltr.i.cil.c & {\bf 23.55} & 69.07 & NP & 32.80 & {\bf 23.55} \\
parport.i.cil.c & {\bf 738.82} & $>$1800.00 & NP & 915.79 & {\bf 738.82} \\
s3\_clnt.blast.01.i.cil.c & 72.55 & 526.77 & {\bf 33.01} & 1000.41 & {\bf 33.01} \\
s3\_clnt.blast.02.i.cil.c & 80.57 & 268.67 & {\bf 62.65} & 312.77 & {\bf 62.65} \\
s3\_clnt.blast.03.i.cil.c & 124.99 & 440.25 & {\bf 60.62} & 314.74 & {\bf 60.62} \\
s3\_clnt.blast.04.i.cil.c & 140.60 & 138.75 & {\bf 63.96} & 197.65 & {\bf 63.96} \\
s3\_srvr.blast.01.i.cil.c & 1030.27 & MO & {\bf 811.27} & 1036.89 & {\bf 811.27} \\
s3\_srvr.blast.02.i.cil.c & $>$1800.00 & 811.77 & 1088.43 & {\bf 360.47} & {\bf 360.47} \\
s3\_srvr.blast.03.i.cil.c & 1166.38 & 424.53 & 961.72 & {\bf 276.19} & {\bf 276.19} \\
s3\_srvr.blast.04.i.cil.c & 208.89 & {\bf 175.64} & 1393.08 & 301.85 & {\bf 175.64} \\
s3\_srvr.blast.06.i.cil.c & $>$1800.00 & $>$1800.00 & 653.62 & {\bf 304.63} & {\bf 304.63} \\
s3\_srvr.blast.07.i.cil.c & $>$1800.00 & $>$1800.00 & {\bf 478.05} & 666.53 & {\bf 478.05} \\
s3\_srvr.blast.08.i.cil.c & $>$1800.00 & 411.92 & 647.87 & {\bf 115.76} & {\bf 115.76} \\
s3\_srvr.blast.09.i.cil.c & $>$1800.00 & 1296.56 & {\bf 445.21} & 1037.09 & {\bf 445.21} \\
s3\_srvr.blast.10.i.cil.c & $>$1800.00 & $>$1800.00 & 645.23 & {\bf 115.10} & {\bf 115.10} \\
s3\_srvr.blast.11.i.cil.c & 1692.77 & 1011.15 & {\bf 367.98} & 844.28 & {\bf 367.98} \\
s3\_srvr.blast.12.i.cil.c & $>$1800.00 & 1188.43 & 658.16 & {\bf 304.05} & {\bf 304.05} \\
s3\_srvr.blast.13.i.cil.c & $>$1800.00 & MO & {\bf 580.33} & 878.54 & {\bf 580.33} \\
s3\_srvr.blast.14.i.cil.c & $>$1800.00 & 463.95 & 653.85 & {\bf 303.21} & {\bf 303.21} \\
s3\_srvr.blast.15.i.cil.c & $>$1800.00 & 604.01 & 645.35 & {\bf 115.88} & {\bf 115.88} \\
s3\_srvr.blast.16.i.cil.c & $>$1800.00 & 653.87 & 651.30 & {\bf 305.11} & {\bf 305.11} \\
\hline
{\bf TOTAL (solved/time)} & {\bf ~19\,/\,8351.18} & {\bf ~23\,/\,11318.45} & {\bf ~29\,/\,13233.06} & {\bf ~31\,/\,12182.03} & {\bf ~32\,/\,8591.12}\\
\hline
\end{tabular}
\end{smallerfont}
\end{table*}

\begin{table*}
\caption{Comparison among different configurations of \blast,
  programs with artificial bugs.
(NP indicates 'no new predicates found during refinement'.)  \label{tab:eval_blast_bug}
}
\centering
\begin{smallerfont}
  \begin{tabular}{l@{\hspace{-4mm}}rrrrr}
    \hline
                   & {\bf \blast 1} & {\bf \blast 2} & {\bf \blast 3} & {\bf \blast 4} & {\bf \blast B}\\
    {\bf Program}  & {\tt\scriptsize (-bfs -predH 0)} & {\tt\scriptsize (-bfs -predH 7)} 
                   & {\tt\scriptsize (-dfs -predH 0)} & {\tt\scriptsize (-dfs -predH 7)} & {\tt\scriptsize (best result)}\\
\hline
cdaudio.BUG.i.cil.c & 108.85 & 99.82 & 26.83 & {\bf 18.79} & {\bf 18.79} \\
diskperf.BUG.i.cil.c & {\bf 889.79} & $>$1800.00 & 926.70 & $>$1800.00 & {\bf 889.79} \\
floppy.BUG.i.cil.c & {\bf 119.60} & $>$1800.00 & 127.68 & $>$1800.00 & {\bf 119.60} \\
kbfiltr.BUG.i.cil.c & 70.83 & 144.25 & NP & {\bf 46.80} & {\bf 46.80} \\
parport.BUG.i.cil.c & 5.70 & 10.95 & {\bf 1.67} & 2.24 & {\bf 1.67} \\
s3\_clnt.blast.01.BUG.i.cil.c & 1003.92 & 28.30 & 304.63 & {\bf 8.84} & {\bf 8.84} \\
s3\_clnt.blast.02.BUG.i.cil.c & 118.48 & {\bf 9.02} & 131.42 & 12.26 & {\bf 9.02} \\
s3\_clnt.blast.03.BUG.i.cil.c & 167.73 & {\bf 6.64} & 133.97 & 12.20 & {\bf 6.64} \\
s3\_clnt.blast.04.BUG.i.cil.c & 187.18 & {\bf 9.78} & 139.04 & 11.70 & {\bf 9.78} \\
s3\_srvr.blast.01.BUG.i.cil.c & 103.06 & {\bf 7.59} & $>$1800.00 & 162.90 & {\bf 7.59} \\
s3\_srvr.blast.02.BUG.i.cil.c & 123.00 & {\bf 7.16} & $>$1800.00 & 183.34 & {\bf 7.16} \\
s3\_srvr.blast.03.BUG.i.cil.c & 55.21 & {\bf 7.42} & 1434.01 & 49.74 & {\bf 7.42} \\
s3\_srvr.blast.04.BUG.i.cil.c & 79.16 & {\bf 7.33} & $>$1800.00 & 53.22 & {\bf 7.33} \\
s3\_srvr.blast.06.BUG.i.cil.c & 1623.73 & 56.11 & 558.18 & {\bf 39.81} & {\bf 39.81} \\
s3\_srvr.blast.07.BUG.i.cil.c & 1582.86 & {\bf 310.84} & 1327.50 & MO & {\bf 310.84} \\
s3\_srvr.blast.08.BUG.i.cil.c & $>$1800.00 & 73.59 & 530.10 & {\bf 40.51} & {\bf 40.51} \\
s3\_srvr.blast.09.BUG.i.cil.c & $>$1800.00 & {\bf 265.48} & 1284.77 & MO & {\bf 265.48} \\
s3\_srvr.blast.10.BUG.i.cil.c & $>$1800.00 & 66.88 & 528.29 & {\bf 40.24} & {\bf 40.24} \\
s3\_srvr.blast.11.BUG.i.cil.c & 722.64 & {\bf 49.05} & 1515.26 & 207.09 & {\bf 49.05} \\
s3\_srvr.blast.12.BUG.i.cil.c & 620.03 & {\bf 38.66} & 555.60 & 39.28 & {\bf 38.66} \\
s3\_srvr.blast.13.BUG.i.cil.c & 831.45 & {\bf 251.56} & 1600.65 & 626.93 & {\bf 251.56} \\
s3\_srvr.blast.14.BUG.i.cil.c & 773.26 & 53.93 & 557.13 & {\bf 39.94} & {\bf 39.94} \\
s3\_srvr.blast.15.BUG.i.cil.c & $>$1800.00 & 77.51 & 530.85 & {\bf 40.19} & {\bf 40.19} \\
s3\_srvr.blast.16.BUG.i.cil.c & 973.44 & 55.97 & 558.44 & {\bf 39.54} & {\bf 39.54} \\
\hline
{\bf TOTAL (solved/time)} & {\bf ~20\,/\,10159.92} & {\bf ~22\,/\,1637.84} & {\bf ~20\,/\,12772.72} & {\bf ~20\,/\,1675.56} & {\bf ~24\,/\,2296.25}\\
\hline
\end{tabular}
\end{smallerfont}
\end{table*}

\end{document}